\documentclass[11pt]{article}
\usepackage[hmargin=1in,vmargin=1in]{geometry}

\usepackage{amsmath,amssymb,amsfonts,amsthm}
\usepackage{libertinus}     
\usepackage{libertinust1math}
\usepackage[dvipsnames,usenames,table]{xcolor}
\usepackage{hyperref}
\hypersetup{colorlinks=true, urlcolor=Blue, citecolor=Blue!50!Cyan, linkcolor=BrickRed, breaklinks, unicode}

\usepackage{setspace}
\usepackage{bbm}
\usepackage{algorithm}
\usepackage[noend]{algpseudocode}
 \usepackage[framemethod=tikz]{mdframed}
\usepackage{xspace}
\usepackage{pgfplots}
\usepackage{bbm}
\usepackage{framed}
\usepackage{subcaption}
 \usepackage{thm-restate}
\usepackage{nicefrac}
 \pgfplotsset{compat=1.5}
 \usepackage[capitalise]{cleveref}
\usepackage{setspace}
\usepackage{mathtools}

\newtheorem{theorem}{Theorem}[section]

\newtheorem{lemma}[theorem]{Lemma}

\newtheorem{definition}[theorem]{Definition}
\newtheorem{remark}[theorem]{Remark}

\newtheorem{fact}[theorem]{Fact}

\newtheorem{question}[theorem]{Question}

\newcommand{\R}{\mathbb R}

\newcommand{\opt}{\text{OPT}}
\newcommand{\eps}{\varepsilon}

\newcommand{\poly}{\mathrm{ poly}}
\newcommand{\cost}{\mathrm{cost}}
\newcommand{\polylog}{\mathrm{ polylog}}
\renewcommand{\tilde}{\widetilde}
\DeclareMathOperator*{\argmin}{arg\,min}
\DeclareMathOperator*{\sat}{SAT}

\newcommand{\E}{\mathbb{E}}

\newcommand{\calA}{\mathcal A}
\newcommand{\calB}{\mathcal {B}}
\newcommand{\calC}{\mathcal {C}}
\newcommand{\calD}{\mathcal {D}}

\newcommand{\calP}{\mathcal {P}}
\newcommand{\calS}{\mathcal {S}}

\newcommand{\calL}{\mathcal{A}}

\newcommand{\badcut}{\eps}
\newcommand{\offset}{\tau(\eps, d)}
\DeclareMathOperator{\dist}{dist}
\newcommand{\globalS}{\opt}
\newcommand{\calH}{\mathcal{H}}
\newcommand{\bc}{\text{ b.c.}}
\DeclareMathOperator{\deto}{det}

\newcommand{\lpar}{\left(}
\newcommand{\rpar}{\right)}
\newcommand{\lbra}{\left[}
\newcommand{\rbra}{\right]}

\DeclareMathOperator{\level}{lvl}
\DeclareMathOperator{\diam}{diam}

\title{\textbf{Almost-Optimal Upper and Lower Bounds for\\ Clustering in Low Dimensional Euclidean Spaces}}
\author{}
\author{Vincent Cohen-Addad\\ Google Research\\ \texttt{\color{pink!70!red} cohenaddad@google.com}\and Karthik C.\ S.\footnote{This work was supported by the National Science Foundation under Grants CCF-2313372 and CCF-2443697, a grant from the Simons Foundation, Grant Number 825876, Awardee Thu D. Nguyen, and partially funded by the Ministry of Education and Science of Bulgaria's support for INSAIT, Sofia University ``St. Kliment Ohridski'' as part of the Bulgarian National Roadmap for Research Infrastructure.}\\ Rutgers University\\\vspace{0.1in} \texttt{\color{pink!70!red} karthik.cs@rutgers.edu}\and David Saulpic\\ Université Paris Cité, CNRS\\ \texttt{\color{pink!70!red} david.saulpic@irif.fr}\and Chris Schwiegelshohn\footnote{This work was partially supported by the Independent Research Fund Denmark (DFF) under a Sapere Aude Research Leader grant No 1051-00106B and by a Google Research Award.}\\ Aarhus University\\ \texttt{\color{pink!70!red} schwiegelshohn@cs.au.dk} }
\date{}

\begin{document}

\maketitle
\begin{abstract}
The $k$-median and $k$-means clustering objectives are classic objectives for
modeling clustering in a metric space. Given a set of points in a metric
space, the goal of the $k$-median (resp.\ $k$-means) problem is to find $k$ representative points so as to minimize the sum of the distances (resp.\ sum of squared distances) from each point to its closest representative.
Cohen-Addad, Feldmann, and Saulpic [JACM'21] showed how to obtain a $(1+\eps)$-factor approximation in low-dimensional Euclidean metric for
both the $k$-median and $k$-means problems in near-linear time $2^{(1/\eps)^{O(d^2)}} n \cdot \polylog(n)$ (where $d$ is the dimension and $n$ is the number of input points).\vspace{0.15cm}

We improve this running time to $2^{\tilde O(1/\eps)^{d-1}} \cdot n \cdot \polylog(n)$, and show an almost matching lower bound: under the Gap Exponential Time Hypothesis for 3-SAT, there is no $2^{{o}(1/\eps^{d-1})} n^{O(1)}$ algorithm
achieving a $(1+\eps)$-approximation for $k$-means.
\end{abstract}
\clearpage

\section{Introduction}
The $k$-median and $k$-means problems are minimization tasks of classic objectives for modeling
clustering in a wide variety of applications arising in data mining and
machine learning. Given a set of points $P\subset X$  and a set of \emph{candidate centers} $\calS \subset X$ in a metric space $(X, \Delta)$,
the goal of the $k$-median problem\footnote{This variant is sometimes referred to as the \emph{discrete} $k$-median problem. In the continuous $k$-median problem we are allowed to pick the $k$ centers $S$ from anywhere in $X$. } is to find a set of $k$ points $S \subset \calS$, called \emph{centers}, so as
to minimize the sum of the distances (given by $\Delta$) from each point of $P$ to its closest
center in $S$ (in the $k$-means problem, the goal is to instead minimize the sum of the squared distances from each point of $P$ to its closest center in $S$).

The algorithmic study of $k$-median started in the early '60s in the operations research community~\cite{kuehn1963heuristic,stollsteimer1961effect,hakimi1965optimum}, while the study of $k$-means was arguably
 initiated by the seminal work of Lloyd~\cite{lloyd1982least}, which applied clustering to quantization of analog signals. 
Since then, both problems have received a tremendous amount of attention.
Naturally, the complexity of the problems varies
with the underlying metric space hosting the input points. 
On the negative side, both problems are
known to be NP-hard in Euclidean space, even when the points lie in the Euclidean plane (and $k$ is large)~\cite{megiddo1984complexity, MahajanNV12}, or when $k=2$~\cite{dasgupta2009random,AKP24} (and the dimension is large).
When both $k$ and $d$ are arbitrary, 
several groups of researchers have shown hardness
of approximation results for both problems~\cite{AwasthiCKS15,LeeSW17,Cohen-AddadS19,Cohen-AddadSL22}. 
Even stronger hardness
results hold if the input lives in an arbitrary metric space (see Guha and Kuller~\cite{GuK99} and \cite{Cohen-AddadSL21}).

\paragraph*{Fine-Grained Complexity.}
Despite those hardness results, the importance of Euclidean inputs in statistics and
machine learning applications 
has led researchers to develop methods and algorithms, in particular through the study of the parameterized complexity of the problem, when the input lies in an Euclidean space. 
Both the dimensionality of the input, $d$, and the target number of clusters, $k$, have been studied as parameters. 
For exact algorithms, the seminal work of Inaba, Katoh and Imai~\cite{InabaKI94} has shown that one can compute an exact solution to the $k$-median and $k$-means problems in time $O(n^{kd+1})$ -- as opposed to the  naive brute-force enumeration which runs in $\tilde{O}(k^n)$ time.
On the negative side, Cohen-Addad, de Mesmay, Rotenberg and Roytman~\cite{CohenaddadSODA18} much later showed that there is no $n^{o(k)}$ exact algorithm for (discrete) $k$-median or $k$-means problem, even when
the dimension is four, and thus partially showing the optimality of the $O(n^{kd + 1})$ algorithm.

To circumvent this hardness, researchers have turned to approximation algorithms. When the dimension $d$ and the precision $\eps$ are taken as parameters, a line of work \cite{AroraRR98, KolliopoulosR07, FriggstadRS19, Cohen-AddadKM19, Cohen-Addad18, jacm} improved the running time of approximation schemes from quasi-polynomial time, $2^{(\log(n)/\eps)^{d-1}}$, to  $2^{(1/\eps)^{O(d^2)}} n$. 
The doubly exponential dependency in $d^2$ stands in stark contrast with other similar results, e.g., for the Traveling Salesperson Problem (TSP). 
For this problem, Kisfaludi-Bak, Nederlof and Wegrzycki~\cite{Kisfaludi-BakNW21} recently settled the exact dependency on $d$, by showing an algorithm running essentially in time $2^{O\lpar 1/\eps^{d-1}\rpar} n$, and a lower bound of $2^{\Omega\lpar (1/\eps)^{d-1}\rpar}$ under the Gap-ETH hypothesis \cite{D16,MR16} (see Definition~\ref{hyp:gap-eth}). 
In light of this result, we consider the following question:

\begin{question}\label{q:approx}
Is it possible to obtain a $2^{O\lpar 1/\eps^{d-1}\rpar} n^{O(1)}$ approximation scheme for $k$-median and $k$-means problems in low-dimensional inputs? Would it be possible to get an even faster $2^{o\lpar 1/\eps^{d-1}\rpar} n^{O(1)}$ approximation scheme for both problems?
\end{question}

\subsection{Our Results}
In this paper, we make substantial progress towards answering \cref{q:approx}. Our first contribution is the design of an improved algorithm:

\begin{theorem}\label{thm:mainAlgo}
    For every $\eps > 0$ and dimension $d$, the $k$-median and $k$-means problems in $\R^d$ can both be  approximated to a $(1+\eps)$-factor in time $2^{\tilde O\lpar 1/\eps^{d-1}\rpar} n \cdot \polylog(n)$.\footnote{The $\tilde O$ notation hides an exponential dependency in $d$, and polynomial dependency in $\log(1/\eps)$.}
\end{theorem}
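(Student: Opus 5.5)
We follow the framework of Cohen-Addad, Feldmann and Saulpic (a randomly shifted quadtree plus portals and badly-cut points), but replace the grid of portals on each cell boundary by a sparser structure adapted to the local solution. The aim is to show that each boundary configuration can be described by $\tilde O(1/\eps^{d-1})$ bits, which makes the dynamic program run in time $2^{\tilde O(1/\eps^{d-1})}$.

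\textbf{Preprocessing.} First reduce to an instance of size $n\cdot\polylog(n)$ with aspect ratio $\poly(n)$. Use a coreset together with a constant-factor approximation $L$, computed in near-linear time. Then compute a randomly shifted quadtree of depth $O(\log n)$. A point $p$ is called \emph{badly cut} when a cell of side length much larger than $\dist(p,L)/\eps$ separates $p$ from its center in $L$. Badly cut points are moved onto their centers in $L$. Following \cite{jacm}, the expected cost increase is $O(\eps)\opt$. After this step, every client $p$ and its optimal center either lie in a common cell of side at most $\poly(1/\eps)\cdot\dist(p,L)$, or the connection crosses cells of only $O(\log(1/\eps))$ relevant levels.

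\textbf{Structural theorem.} This is the main step. We show that there is a near-optimal solution $S$ with the following property. For each quadtree cell $C$ of side $\ell$, the interaction of $S$ with the outside of $C$ can be summarized by rounding each outside center "relevant" to clients in $C$ to a point of a grid on the boundary sphere, at granularity $\eps\cdot(\text{distance to }C)$. This is in the spirit of Kisfaludi-Bak, Nederlof and Wegrzycki's sparsity-sensitive patching for TSP.

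The key geometric claim is that only $O(1/\eps^{d-1})$ "directions" matter. A client at distance $r$ from $\partial C$ whose center lies outside at distance $D$ needs only an $\eps$-accurate estimate of the center's position. Grouping outside centers by direction cones of angle $\eps$ and by distance scales gives $O(1/\eps^{d-1})$ cones times $O(\log(1/\eps))$ scales. The dimension drops from $d$ to $d-1$ because the radial distance can be rounded multiplicatively, so only the angular part costs $\eps^{-(d-1)}$.

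To make this rigorous, we modify $S$. Whenever a cell boundary has too many centers within a single cone-scale bucket, we argue that the clients can be served by a representative center at $(1+\eps)$ cost. We charge the extra cost to the clients crossing, using the random shift so that each client pays $O(\eps)$ times its cost in expectation. We expect this step to be the main obstacle. The difficulty is ensuring that the charging is consistent across all $O(\log(1/\eps))$ levels without losing a factor of $d$ in the exponent. As in \cite{jacm}, we would first try bounding the expected per-client overhead by summing over levels with geometrically decreasing probabilities.

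\textbf{Dynamic program.} Bottom-up on the quadtree, a table entry for a cell $C$ is indexed by two pieces of data:
\begin{itemize}
\item the number of centers opened inside $C$;
\item a boundary profile, namely for each of the $\tilde O(1/\eps^{d-1})$ cone-scale buckets a rounded distance to the nearest outside center, and symmetrically the rounded distance to the nearest inside center as seen from outside.
\end{itemize}
The number of profiles is $(1/\eps)^{\tilde O(1/\eps^{d-1})}=2^{\tilde O(1/\eps^{d-1})}$.

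To combine the $2^d$ children, we enumerate consistent profile tuples. The total time is then $2^{\tilde O(1/\eps^{d-1})}$ per cell, over $n\polylog n$ cells. The dependence on $k$ in the index is removed by the standard Lagrangian or compressed-quadtree arguments of \cite{jacm}. The $k$-means case is identical, with distances replaced by squared distances and the badly-cut threshold adjusted accordingly.
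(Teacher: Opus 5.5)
There is a genuine gap. The structural theorem is the whole content of the result, and you state it only as a goal; you flag the charging step yourself as unresolved. Beyond that, the cone-and-scale mechanism does not give the count you claim.

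\emph{The bucket count fails for nearby centers.} An $\eps$-accurate description of an outside center at distance $D$ from a cell of side $\ell$ needs resolution $\eps D$. For $D\ll\ell$, along a boundary of area about $\ell^{d-1}$, this means $(\ell/(\eps D))^{d-1}$ buckets. That is already $\eps^{-2(d-1)}$ at $D=\eps\ell$, and nothing bounds $D$ from below, so the relevant scales are not $O(\log(1/\eps))$ and the sum over scales diverges. If you instead use absolute resolution $\eps\ell$ near the cell, you are back to ordinary portals. A client and its center separated at level $i$ then pay an additive error of order $\eps 2^i$. The real difficulty, which your sketch does not address, is showing this error is affordable with only $\tilde O(1/\eps^{d-1})$ portals.

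\emph{Two parts of that difficulty are left open.}
\begin{itemize}
\item For $k$-means, averaging over the random shift does not control squared detours. The sum $\sum_i \Pr[\text{cut at level } i]\cdot \eps^2 2^{2i}\approx \sum_i d r\,\eps^2 2^{i}$ is dominated by the top levels. So the $k$-means case is not ``identical with squared distances.'' One must pay detours only up to level $\log r+\offset$ and handle the badly-cut event separately.
\item Your preprocessing only controls the pair $(p,\calL(p))$, so it does not give the dichotomy you state. When $\opt_p\gg \calL_p$, the ball $B(p,3(\calL_p+\opt'_p))$ is badly cut with probability up to $\eps$. Nothing in your sketch says where such a $p$ should be served.
\end{itemize}

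\emph{How the paper closes these.} It reroutes such $p$ to $\calS^*(\calL(p))$. This requires $\calS^*$ to contain every center of $\calL$ that is badly cut with respect to $\opt'$. That is made feasible by first deleting $\lfloor \eps|\opt^{\geq 2}|/2\rfloor$ optimal centers at extra cost $O(\eps)(\cost(P,\opt)+\cost(P,\calL))$ (\cref{lem:coststep1}); this ingredient is entirely missing from your proposal. The per-client budget $b=b_1+b_2+b_3$ of \cref{def:budget} then works in two regimes:
\begin{itemize}
\item On non-badly-cut levels it pays the detour $\eps 2^i r+\eps^2 2^{2i}$. The expectation is $O(d^2\log(1/\eps))\,\eps r^2$, since only $O(\log(d/\eps))$ levels contribute.
\item Otherwise it pays a flat $O(d)(\calL_p^2+(\opt'_p)^2)$, which happens with probability at most $\eps$.
\end{itemize}
The case analysis of \cref{lem:detourkmeans} shows every client's portal detour fits in its budget. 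With this analysis, the plain portal grid with $\rho=\eps/\log(1/\eps)$, i.e.\ $(\log(1/\eps)/\eps)^{d-1}$ portals, and the DP of \cite{jacm} already give $2^{\tilde O(1/\eps^{d-1})} n\,\polylog(n)$. No sparsity-sensitive patching is needed.
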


The above algorithm is our main contribution, and improves on the techniques of \cite{jacm}. It also readily extends to the continuous $k$-median and $k$-means problems (see \cref{rem:cont}). 

On the technical side, this algorithm deepens our understanding of the \emph{quadtree} decomposition for clustering.
A long line of work based on this decomposition \cite{AroraRR98, KolliopoulosR07, FriggstadRS19, Cohen-AddadKM19, Cohen-Addad18, jacm} improved the running time of approximation schemes (i.e., $(1+\eps)$-factor approximation
algorithms for any small enough constant $\eps$) for clustering from quasi-polynomial time, $2^{(\log(n)/\eps)^{d-1}}$, to  $2^{(1/\eps)^{O(d^2)}} n$ \cite{jacm}. 
The main conceptual insight of these works is to bound the number of \emph{portals} necessary along the decomposition so that forcing a path to go through portals incurs at most a $(1+\eps)$-factor distortion.
The main contribution of \cite{jacm} is that after processing a tiny fraction of the input, only a constant number of portals are sufficient at each level of the quadtree decomposition.
However, this constant remains much larger than that required for other natural problems.

Indeed, carefully analyzing the performances of the quadtree dissection and the number of required portals to achieve a $(1+\eps)$-factor approximation has been a fruitful direction of research to design approximation schemes in Euclidean space.
The initial work of Arora \cite{Arora96, Arora98} showed approximation schemes for the Traveling Salesperson Problem (TSP) and other routing problems, first in quasi-polynomial time (for $d > 2$) and then with running time $n \log(n)^{O(1/\eps)^{d-1}}$. 
For TSP, the running time was subsequently significantly improved: first by Rao and Smith~\cite{rao1998approximating} who achieved $(1/\eps)^{1/\eps^{d-1}} n \log n$, then by Bartal and Gottlieb \cite{BartalG13} who brought it down to $2^{1/\eps^{O(d)}} n$, linear in $n$, but suboptimal in $d$. 
Finally, \cite{Kisfaludi-BakNW21} settled the exact dependency on $d$, by showing an algorithm running essentially in time $2^{O\lpar 1/\eps^{d-1}\rpar} n$, and a lower bound of $2^{\Omega\lpar (1/\eps)^{d-1}\rpar} \poly(n)$ under the Gap-ETH hypothesis \cite{D16,MR16} (see Definition~\ref{hyp:gap-eth}).

In light of this literature, our contribution is therefore an improved analysis of the quadtree dissection for clustering problems, that allows us to (almost) match the number of portals required for the well-studied TSP. This requires an analysis and techniques distinct from those for TSP, in particular because they fail if distances are squared -- as what happens in the $k$-means problem.
We note that, due to its simplicity, applications of the quadtree extend far beyond approximation algorithms in low-dimensional space. 
For clustering, quadtrees are used for instance in the streaming scenario, when memory has to be kept low \cite{BravermanFLSY17}, but also to design linear time algorithms in high-dimensional spaces \cite{rejectionSampling} or to construct differentially-private algorithms~\cite{kdd}. 
Therefore, better understanding the quadtree's performance, even in the simplest scenario, is likely to have a far-reaching impact.

We complement Theorem~\ref{thm:mainAlgo} with a conditional lower bound, which shows the near-optimality of our algorithm under the Gap-ETH hypothesis:

\begin{restatable}{theorem}{infLD}
    \label{thm:infLD}
Assuming the Gap-ETH hypothesis, for every integer $d \ge 2$, there exists a constant $c > 0$ such that no approximation scheme can, given an instance of discrete $k$-means (or discrete $k$-median) with $N$ points in Euclidean space $\mathbb{R}^d$ and a parameter $\eps > 0$, compute a $(1+\eps)$-factor approximation in time $2^{c(1/\eps)^{d-1}} \poly(N)$.
\end{restatable}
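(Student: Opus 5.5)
The plan is a Gap-ETH reduction in the spirit of the TSP lower bound of \cite{Kisfaludi-BakNW21}. The Gap-ETH hypothesis says that for some $\delta>0$, no $2^{o(n)}$-time algorithm can distinguish satisfiable 3-SAT instances with $n$ variables and $O(n)$ clauses from those in which every assignment violates a $\delta$ fraction of the clauses. We may also assume bounded occurrence, so each variable appears in $O(1)$ clauses.

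\emph{Reduction to a grid-embedded constraint problem.} First, I would reduce gap 3-SAT to a gap version of a binary constraint satisfaction problem whose constraint graph is a $(d-1)$-dimensional grid of side $m$, so with $m^{d-1}$ vertices. Each vertex carries a constant-size alphabet, but the total information is $\Theta(n)$. The natural candidate is the ``embedding into a grid'' machinery used for TSP and for geometric problems by de Berg et al. There, a sparse CSP with $n$ constraints is routed on a $(d-1)$-dimensional grid of $n$ cells with only constant overhead per cell, and this must be done while preserving a constant soundness gap. The gap is inherited if each clause is checked locally and wires are encoded by consistency constraints. Here $n=\Theta(m^{d-1})$, and we will set $1/\eps=\Theta(m)$.

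\emph{Geometric gadgets for $k$-means.} Next, I would embed the grid CSP in $\R^d$. The first $d-1$ coordinates index the grid cell. The $d$-th coordinate, together with small perturbations, encodes the alphabet. Each vertex $v$ gets a gadget: a cluster of points whose cheapest covering by a prescribed number of centers takes one of a constant number of optimal configurations, one per label. Any other configuration costs an extra additive $\Omega(1)$. Adjacent gadgets share ``connector'' points. These can be served at baseline cost only if the two chosen configurations satisfy the edge constraint; otherwise they pay an additive $\Omega(1)$. The number of centers $k$ is the sum of the per-gadget budgets.

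\emph{Completeness and soundness.} A satisfying assignment gives cost exactly $B$, where $B=\Theta(m^{d-1})$ is the total baseline cost. For soundness, suppose a solution has cost below $(1+c'\,)B$. A counting argument then shows that all but a small constant fraction of gadgets receive exactly their budget of centers and are in a canonical configuration. Centers cannot profitably serve two distant gadgets when gadgets are separated by distance $\gg$ the gadget scale, since squared distances penalize this heavily. Decoding the configurations therefore yields an assignment violating few constraints, contradicting the gap. The total gap is $\Theta(n)$ additive on cost $\Theta(n)$, which is only a constant-factor gap, so that alone is not enough. The key point is to scale so that the relative gap is $\eps$. We need $1/\eps$ to relate to $m$ as $n\approx (1/\eps)^{d-1}$. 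I would therefore make the gadgets' baseline cost dominated by a large ``background'' term. For instance, spread the grid over a unit cube with spacing $1/m$, so that each violated constraint costs $\Theta(1/m)$ relative to a baseline $\Theta(1)$ per unit of cost. Equivalently, I would add padding so that the violation penalty, summed over $\delta n$ violations, is an $\eps\approx 1/m$ fraction of the total. A $(1+\eps)$-approximation with $\eps=c/m$ then decides the gap instance. Running time $2^{c(1/\eps)^{d-1}}\poly(N)=2^{o(n)}$ for small $c$ contradicts Gap-ETH. The $k$-median case uses the same gadgets with unsquared distances.

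The main obstacle I expect is the gadget design and the soundness analysis. We need rigid, label-encoding configurations, and we must rule out non-canonical solutions, such as centers shared between gadgets or budgets redistributed across gadgets. The penalty for each violation must be proportional and additive so that a constant fraction of violations translates into a multiplicative $(1+\Omega(\eps))$ loss. I would first try gadgets made of small regular simplices or collinear point triples, where the cost difference between configurations is explicitly computable.
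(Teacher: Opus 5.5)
There is a genuine gap. It sits in your first step, and it is what forces the unworkable rescaling and padding later. A binary CSP with constant alphabet on a $(d-1)$-dimensional grid of side $m$ has treewidth $O(m^{d-2})$. With $n=\Theta(m^{d-1})$, it can therefore be solved exactly in time $2^{O(m^{d-2})}=2^{O(n^{(d-2)/(d-1)})}=2^{o(n)}$. For $d=2$ the grid is a path, and dynamic programming solves it in polynomial time. So under ETH, no reduction from 3-SAT on $n$ variables to such a CSP of size $O(n)$ can exist, gap-preserving or not.

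You actually noticed the symptom when you observed that your instance has only a constant relative gap. That gap is not too weak; it is too good to be true. Combine a constant-gap instance with $\poly(n)$ points in fixed dimension with the known schemes running in time $f(\eps,d)\, n \polylog n$ (e.g., Theorem~\ref{thm:mainAlgo}). The result would decide gap 3-SAT in polynomial time, so that construction cannot exist. Neither of your fixes repairs this. Uniform rescaling multiplies every solution's cost by the same factor, so all cost ratios are unchanged. Padding only dilutes a gap you do not have.

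The missing idea is the volume blow-up in the framework of de Berg et al. It routes the bounded-occurrence formula through a $d$-dimensional grid of side $m=\Theta(n^{1/(d-1)})$, not a $(d-1)$-dimensional grid of $n$ cells. The wires fill the extra dimension, so the embedded graph has $|E|=\Theta(m^d)=\Theta(n^{d/(d-1)})$ edges.

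The paper then uses the standard vertex-cover-to-clustering reduction rather than label-encoding gadgets. Clients are edge midpoints and candidate centers are embedded vertices. Each midpoint is at squared distance $1$ or $2$ from its endpoints and at least $4$ from every other vertex. The baseline cost is therefore $\Theta(|E|)$. In the soundness case, the $\Omega(n)$ violated clauses force $\Omega(n)$ uncovered edges, each costing an additive $\Omega(1)$. This uses a deficiency-equals-surplus count over destroyed parts, which also handles your worry about centers being redistributed across gadgets.

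The relative gap is thus $\eps=\Theta(n/|E|)=\Theta(n^{-1/(d-1)})$, i.e.\ $(1/\eps)^{d-1}=\Theta(n)$. It arises intrinsically from the volume of the embedding, with no padding. This is what your outline needs in place of the $(d-1)$-dimensional grid CSP and the rescaling step.
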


This lower bound is obtained by amalgamating the technical ideas of \cite{LeeSW17,CohenaddadSODA18,Cohen-AddadS19} and the framework of de Berg, Bodlaender, Kisfaludi-Bak, Marx and van der Zanden \cite{BergBKMZ20}. Ideas to reduce the Vertex Cover problem to clustering problems were elaborated in \cite{LeeSW17,CohenaddadSODA18,Cohen-AddadS19}, and \cite{BergBKMZ20} built a framework  to prove fine-grained lower bounds for a host of geometric problems in low dimensions. We put them together to obtain Theorem~\ref{thm:infLD}.

\paragraph*{Extensions.}
The framework developed in \cite{jacm} applies to many variants of $k$-means (and $k$-median), namely prize-collecting $k$-means, $k$-means with outliers and Facility location. As our structure theorem is a direct improvement of theirs, we also improve the running time for those problems. We give a sketch of the arguments in Appendix \ref{app:extension}. 
Our analysis also improves the result for doubling metric (a generalization of Euclidean space), however attaining a non-tight complexity $2^{\tilde O(1/\eps^d)}$. 
We therefore focused this paper on the Euclidean setting, for which our analysis is almost tight.

\subsection{Our Techniques}\label{sec:overview}
For the ease of presentation, we detail below our upper bound for the $k$-means problem in the Euclidean plane (i.e., with $d=2$). Our input is a set of points  $P \subset \R^2$, a set of candidate centers $\calC\subset \R^2$, and the goal is to compute a set of centers $S \in \calC^k$ ($\calC^k$ is the $k$-wise partition product of $\calC$) that minimizes the sum of  squared distances from points of $P$ to their closest center in $S$.

\paragraph*{Upper bound.} Our starting point is the quadtree dissection equipped with portals. We briefly recall in the next few  paragraphs its construction and main property. 

First, find an axis-aligned rectangle that contains all input points. Then, randomly split the rectangle into 4 axis-aligned rectangles of side length reduced by a factor of approximately 2. Continue this splitting recursively until each rectangle contains a single point: this defines a recursive decomposition of the input. 
We say that two points are \textit{cut at level $i$} when the smallest rectangle containing both points is at the $i^{\text{th}}$ level of the decomposition. Assuming that the leaves of the decomposition have diameter 1, the rectangles at the $i^{\text{th}}$ level have diameter roughly $2^i$.

Now, a set of \emph{portals} is placed regularly along the boundary of each rectangle. 
Instead of connecting points to centers via straight lines, we consider only \emph{portal-respecting} paths, defined as follows. To connect two points $p$ and $q$ cut at level $i$, a portal-respecting path consists of a sequence of segments connecting $p, p_1, ..., p_i, q_i, ..., q_1, q$, where $p_j$ (resp.\ $q_j$) is a portal of the rectangle at level $j$ containing $p$ (resp. $q$). 
Considering only those paths enforces some detours, but a simple dynamic program can compute the best portal-respecting solution, whose complexity naturally depends (exponentially) on the number of portals. 
Furthermore, if $1/\rho$ many portals are placed along each boundary, it can be computed that a portal-respecting path between two points $p, q$ cut at level $i$ has length (essentially) $\|p-q\| + \rho 2^i$. Indeed, since the boundaries of the rectangles are  geometrically decreasing, the detour to go through a portal at level $i$ dominates all others; and since the rectangles at this level have diameter $2^i$, by regularly placing    $1/\rho$ many portals, we can ensure that there is a portal at distance at most $\rho 2^i$ from the point where the straight line path crosses the boundary.
Therefore, the goal of the analysis is to show that, for a small number of portals, a portal-respecting solution with almost optimal cost exists. 

The standard argument goes as follows. The main property of the quadtree decomposition is that, for any pair of points $p, q \in \R^2$, the probability that they are cut at level $i$ is $\|p-q\| / 2^i$. By using $1/\rho$ portals, the detour to connect $p$ and $q$ via portals in case they are cut at level $i$ is $\rho 2^i$.
Therefore, the expected detour at level $i$ is $\|p-q\| / 2^i \cdot \rho 2^i = \rho \|p-q\|$. Since there are $\log n$ levels in the decomposition, the expected detour is $\rho \|p-q\| \cdot \log n$, and one has to take $\rho = \eps / \log n$ to get a tiny expected detour.

However, having $\log n / \eps$ many portals does not yield an efficient algorithm. 
More importantly, this analysis does not extend to $k$-means problem: although the expected distance between $p$ and $q$ is small, the expected \emph{squared} distance is not! Indeed, the expected squared detour at level $i$ is $\|p-q\| / 2^i \cdot (\rho 2^{i})^2 = \rho^2 \|p-q\| 2^i$, and the term $2^i$ cannot be easily controlled.
To cope with this, \cite{jacm} introduced a careful preprocessing of the input, to ensure that no point is cut at a level higher than $\|p-q\|/\eps$. In that case, taking $1/\eps^2$ portals is enough to ensure a small detour -- and also small squared detour, as this is not an average-case argument (unlike the previously seen argument). This preprocessing is essentially the following: start from a constant-factor approximate solution $\calA$ (which is a set of $k$ centers), and for each point $p$, let $\calA_p$ be its distance to the closest center of $\calA$. 
If the ball $B(p, \calA_p/\eps)$ is cut by the decomposition at a level higher than $\log(\calA_p/\eps^2)$, then replace $p$ by a copy of its closest center in $\calA$. They showed that each point is replaced with probability at most $\eps$, therefore yielding an expected cost increase of $\eps \cost(P, \calA) = O(\eps)\opt$. 
Furthermore, \cite{jacm} showed that placing $1/\eps^3$ many portals ensured that,  for any center of $\calA$, either it can be connected to the optimal solution through portals with a small detour, or one can assume that the center is part of the optimal solution. Therefore, after preprocessing, each client can be connected through portals to the optimal solution, and dynamic programming allows to compute a near-optimal solution.

However, this analysis may be seen as ``worst-case", in the sense that it treats  all the clients for which the ball is cut at a level below $\log(\calA_p/\eps^2)$ in the same way. 
Our contribution is to mix the average-case analysis with the techniques of \cite{jacm} in order to reduce the number of necessary portals. 
We carefully define a budget for each point, according to the level where the point is cut from the approximate solution $\calA$, as in \cite{jacm}, but now also according to the level where it is cut from the optimal solution. 
We first show that, with constant probability, this budget is very cheap, namely an $\eps$-fraction of the optimal cost. 
We then show that this budget is enough to pay for the preprocessing of the input, and for the detour incurred by making a solution portal-respecting.  The algorithm is the same as the one of \cite{jacm}: our contribution lies in a much tighter analysis.

This analysis differs from \cite{jacm} as the detour they tolerated  was independent of the optimal solution: using this additional information allows us to be more precise, and reduce the number of required portals to $(\log(1/\eps)/\eps)^{d-1}$, instead of $1/\eps^{O(d)}$. 
Although the global picture remains similar, finding the right balance between a budget (1) small enough to not blow up the cost, and (2) that is still enough to pay for making the optimal solution portal-respecting is a non-trivial extension of their proof. 
Indeed, it requires new insights on the way to connect a point $p$ that is \emph{badly cut}, namely cut from its optimal center $s$ at a level higher than $\log\lpar \|p-s\|/\eps\rpar$.
For this, we show that our budget allows to connect the point in a portal-respecting way to the center $\tilde s$ closest to $\calL(p)$, where $\calL(p)$ is the closest center to $p$ in the constant-factor approximate solution $\calL$.

\paragraph*{Lower bound.}
Our lower bound construction builds upon the techniques of de Berg et al.~\cite{de2018framework}. Given a 3-SAT formula, they show how to construct a graph embedded into $\R^d$ such that solving Vertex Cover on the embedded graph is equivalent to deciding whether the formula has a valid assignment or not. This graph has $|E| = n^{d/(d-1)}$ many edges. We use the (by now standard, see e.g.~\cite{CohenaddadSODA18}) relationship between $k$-means and Vertex Cover, and build a $k$-means instance so that computing its optimal cost allows to determine whether the formula is satisfiable or not. 

To cope with approximation, we observe that the framework of \cite{de2018framework} can be readily extended to handle constant gaps in the completeness and soundness cases by relying on Gap-ETH \cite{D16,MR16} instead of ETH (much like in the work of \cite{Kisfaludi-BakNW21}). Thus, we show that any $(1+\eps)$-factor approximation to $k$-means leads to a vertex cover that covers almost all the edges, leaving behind only $O(\eps |E|)$ many edges uncovered.

Thus, starting from a hardness of approximation instance    of the Vertex Cover problem which can be embedded on the grid in $\mathbb{R}^d$, where every edge can be thought of as a line-segment, we think of the mid point of these line segments as the points to cluster and the points corresponding to the vertices as the set of candidate centers. A crucial property of this clustering instance embedding is that the only candidate centers close to the midpoints of an edge are the points corresponding to the two vertices that it is incident on. In the completeness case, by picking the points corresponding to the vertex cover we have that every point is close to one of the $k$-centers (in particular the vertex that covers that edge), and in the soundness case one can recover from any $k$ centers with a low clustering cost, a vertex cover for the original graph that covers most edges.

\subsection{Further Related Work}\label{app:relatedWork}

We covered previously the hardness results for $k$-means and $k$-medians; in terms of upper-bound, for general metrics a recent breakthrough from \cite{Cohen-Addad0LSS25} achieves a $(2+\eps)$-approximation for $k$-median, and a $5.6$-approximation for $k$-means \cite{kmeans}. Those two works improve a long line of works, based on primal-dual techniques and the so-called Lagrangian-multiplier preserving algorithms for Facility Location, see e.g. \cite{JainV01, JMS02, LiS16}. 

To improve those approximation ratio, one can resort to algorithms parameterized by the number of clusters, $k$: one can get FPT algorithm with approximation matching the lower bound of \cite{JMS02}, namely $1+2/e$ for $k$-median and $1+8/e$ for $k$-means \cite{Cohen-AddadG0LL19}. 

Other techniques, specific to Euclidean Space, have been developed in the past decades. Dimension reduction \cite{MakarychevMR19} and coreset \cite{Cohen-AddadSS21} allow to reduce the dimension of the input to $O(\log k / \eps^{-2})$ and the number of distinct points to $\tilde{O}(k \eps^{-O(1)})$, hence leading to simple FPT algorithms based on naive enumeration.
Recent papers improved this simple approach, reducing the dependency in $k$ (in the exponent) \cite{AbbasiBBCGKMSS23}. 

We covered in the introduction most of the literature related on low-dimensional clustering -- which is the study of the problems parameterized by $d$. To compute a $(1+\eps)$-approximation in time polynomial in $n$ and $k$, any algorithm must have a running time at least doubly exponential in $d$, as the problem is APX-hard in dimension $\Omega(\log n)$. The best of these algorithms is from \cite{jacm}, with a near-linear running time of $f(\eps, d)\cdot  n \operatorname{polylog} n$. 

Finally, tractability of clustering has also been studied with different parameters, e.g., the cost by Fomin, Golovach and Simonov \cite{FominGS21}, who showed a $D^D \mathrm{poly}(nd)$ exact algorithm for $k$-median, where $D$ is the optimal cost. 

\subsection{Organization of the paper}
We focus the presentation on the case of $k$-means, as dealing with squared distances with quadtree is arguably harder. In Section~\ref{sec:prelim}, we describe the necessary notions and definitions to prove Theorem~\ref{thm:mainAlgo}. In Section~\ref{sec:alg} we analyze our PTAS yielding Theorem~\ref{thm:mainAlgo} and then in Section~\ref{sec:infLD} we show the lower-bound result obtaining Theorem~\ref{thm:infLD}.

We present in Appendix~\ref{app:k-med} how to adapt the argument for $k$-median. 
The main body contains the proof of the structure theorem that allows to reduce the number of portals required by the quadtree. 
We  extend the algorithmic results to other objectives in Appendix \ref{app:extension}.

\section{Preliminaries}\label{sec:prelim}
\subsection{Definitions}

We consider the Euclidean space $\R^d$, with the $\ell_2$ metric: $\dist(p, q) := \sqrt{\sum_{i=1}^d (p_i - q_i)^2}$. 
For any point $p$ and $r \geq 0$, the ball centered at $p$ of radius $r$ is $B(p,r) := \{x \in \R^d: \dist(p, x) \leq r\}$.
The (discrete) $k$-median and $k$-means problems are defined as follows: we are given a set of \emph{clients} $P$ and of \emph{candidate centers} $\calC$. A \emph{solution} is any subset of $\calC$ with size $k$. 
The goal is to compute the solution $\calS$ that minimizes $\cost(P, \calS) := \sum_{p \in P} \dist(p, \calS)^z$, with $z=1$ for $k$-median and $z=2$ for $k$-means.
We say a solution is an $\alpha$-approximation if its cost is at most $\alpha$ times the minimal cost.

\begin{remark}\label{rem:cont}
    We consider in this paper the \emph{discrete} Euclidean clustering. In the continuous version, centers can be placed arbitrarily in the space. A classical result shows how to reduce continuous to discrete, by computing a set of $1/\eps^d \cdot |P|$ candidate centers that contain a $(1+\eps)$-approximation \cite{Matousek00}.
\end{remark}

For some set of points $\calS$, and any point $p \in \R^d$, we define $\calS(p) = \argmin_{s \in \calS} \dist(p, s)$ be the closest point in $\calS$ to $p$ ; and we let $\calS_p :=\dist(p, \calS(p) )$ be the distance from $p$ to its closest point in $\calS$.

We assume that the minimal distance between two points is $1$, and denote $\diam(P)$ the maximum distance between two points of $P$. We also assume that $P \subset B(0, \diam(P))$. Using standard arguments for clustering problems, we can assume $\diam(P) = \poly(n)$ \cite{CohenAddadFS19}.

A hierarchical decomposition $\calD$ is a sequence $\calB_1, \calB_2 ...$ such that for any $i$, $\calB_i$ is a partition of $P$, and $\calB_i$ is a refinement of $\calB_{i+1}$: namely, every part of $\calB_i$ is fully contained in a part of $\calB_{i+1}$ (note that in this definition, parts are increasing with $i$).

\subsection{The Hierarchical Decomposition and Its Properties}

\begin{lemma}\label{lem:talwar-decomp}
For any set of points $P \subset \R^d$ and any $\rho > 0$, 
there is a randomized hierarchical 
decomposition $\calD$ such that the diameter of a part $B \in \calB_i$ is at most
$2^{i+1}$, $|\calD|\leq\lceil\log(\diam(P))\rceil$, each part $B \in \calB_i$ is refined in at most $2^{O(d)}$ parts at level $i-1$, and:

\begin{enumerate}

  \item\label{prop:doub:prob} for any $p \in \R^d$, 
radius $r$, and level $i$, we have 
    \[\Pr[\calD \text{ cuts } B(p,r) \text{ at a level } i] \leq  dr/2^i.\]

 \item\label{prop:doub:portals} for every part $B\in \calB_i$ of the decomposition, there is a set of at most $ 1/\rho^{d-1}$ many 
    \emph{portals} such that for every points $p \in B, q \notin B$ there is a portal $x\in\calP_B$ close to the line connecting $p$ to $q$, i.e., 
$\|p-x\| + \|x-q\| \leq \|p-q\| + \rho 2^{i+1}$.
Furthermore, any portal of level $i+1$ that lies in $B$ is also a 
portal of $B$.
    \end{enumerate}
This decomposition can be found in time $(1/\rho)^{O(d)} n \log (\diam(P))$.
\end{lemma}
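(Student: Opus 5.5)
We would prove the lemma with the classical randomly shifted quadtree, in the form used by Arora and by \cite{jacm}. Scale so that the minimum interpoint distance is $1$ and let $L = 2^{\lceil \log \diam(P)\rceil+1}$. Choose a uniform random shift $a\in[0,L)^d$ and consider the hypercube $[-L,L)^d + a$, which contains $P$ since $P\subset B(0,\diam(P))$. Recursively split each cell into $2^d$ congruent subcells by halving every side. The level-$i$ partition $\calB_i$ is induced on $P$ by the grid of side length $2^i/\sqrt d$ shifted by $a$. To obtain side $2^i/\sqrt d$ exactly, we rescale the whole picture by a factor $\sqrt d$. Each level-$i$ cell then has diameter at most $2^i\le 2^{i+1}$, and each cell has exactly $2^d = 2^{O(d)}$ children. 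Cells at level $0$ have diameter below the minimum distance, so they contain at most one point, and the number of levels is $O(\log \diam(P))$. Adjusting the constant in the choice of side length, or merging the top levels, gives $|\calD|\le\lceil\log\diam(P)\rceil$. Refinement is immediate because the dyadic grids are nested.

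For property~\ref{prop:doub:prob}, note that $B(p,r)$ is cut at level $i$ only if some hyperplane of the level-$i$ grid meets the ball. In a fixed coordinate direction $j$, the level-$i$ grid hyperplanes are spaced $2^i/\sqrt d$ apart. Since the shift $a_j$ is uniform modulo this spacing, the probability that one of them hits the interval $[p_j-r,p_j+r]$ is at most $2r\sqrt d/2^i$. A union bound over the $d$ directions gives $2d^{3/2}r/2^i$. To get exactly $dr/2^i$ we would instead use cells of side $2^i$ measured in $\ell_\infty$ with the appropriate diameter bookkeeping, so that the per-coordinate probability becomes $r/2^i$ after constants are adjusted. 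If an exact constant $d$ is needed, we would state the side length as $2^{i+1}/\sqrt d$ and accept a diameter of $2^{i+1}$, which yields probability $r\sqrt d/2^i\le dr/2^i$. I expect this constant juggling between diameter and cut probability to be the only delicate bookkeeping.

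For property~\ref{prop:doub:portals}, on each of the $2d$ facets of a level-$i$ cell we place a regular grid of spacing roughly $\rho 2^{i}/\sqrt d$. Each facet is a $(d-1)$-dimensional cube of side $\le 2^{i+1}/\sqrt d$, so it receives $O(1/\rho)^{d-1}$ portals. Rescaling $\rho$ by a $d$-dependent constant brings the count to at most $1/\rho^{d-1}$ per cell.

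Now let $p\in B$ and $q\notin B$. The segment $pq$ leaves $B$ through some point $y$ on the boundary of $B$. Take $x$ to be the portal nearest to $y$; then $\|x-y\|\le \rho 2^{i}$. By the triangle inequality,
\[
\|p-x\|+\|x-q\| \le \|p-y\|+\|y-q\|+2\|x-y\| \le \|p-q\|+\rho 2^{i+1}.
\]

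For the nesting requirement, we make portal grids dyadic: the spacing at level $i$ is a power of two times the base spacing, halving with the level. A level-$(i+1)$ portal lying inside a child cell $B$ must lie on a facet of $B$, because parent facets are unions of child facets, and it then coincides with a point of the finer grid of $B$. For this we also need the child's portals to include all points of the parent's portal grid that lie on the child's boundary. Since the finer spacing at level $i$ is half the spacing at level $i+1$ and both are aligned to the same shifted lattice, the parent grid points are a subset of the child grid. This alignment is where I would be careful; choosing all portal grids as subsets of a single shifted dyadic lattice makes it automatic.

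Finally, for the running time, we build the tree top-down and only create cells that are nonempty. This gives $O(n\log\diam(P))$ cells, each costing $2^{O(d)}$ work plus $(1/\rho)^{O(d)}$ to list its portals. The total is therefore $(1/\rho)^{O(d)}\, n\log\diam(P)$.
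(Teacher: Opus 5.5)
Your construction is the one the paper intends. The paper gives no proof of its own: it identifies the decomposition with the randomly shifted quadtree of \cite{jacm} and cites \cite{har2011geometric} for the cut probability. Your skeleton is the standard argument: uniform shift with a per-coordinate bound, exit point plus nearest facet portal, dyadic alignment for nesting, and top-down construction over nonempty cells.

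The gap is in the steps where you claim the statement's exact constants. Your final fix for Property~1 (side $2^{i+1}/\sqrt d$, diameter $2^{i+1}$, ``probability $r\sqrt d/2^i\le dr/2^i$'') only bounds a single coordinate. Summing over the $d$ directions gives $d^{3/2}r/2^i$. Using the sharper per-coordinate event ``the projection of the ball meets a level-$i$ but no level-$(i+1)$ grid hyperplane'', whose probability is at most $r/s_i$ for side $s_i$, this improves only to $d^{3/2}r/2^{i+1}$.

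No choice of side length repairs this. Because the shift coordinates are independent, the probability that $B(p,r)$ is cut exactly at level $i$ is $(1-r/s_i)^d-(1-2r/s_i)^d$, which is asymptotically $dr/s_i$ as $r\to0$. The constraint $\sqrt d\, s_i\le 2^{i+1}$ then forces it to be at least about $d^{3/2}r/2^{i+1}$, which exceeds $dr/2^i$ once $d\ge5$. The quadtree gives either exactly $dr/2^i$ with side $2^i$ (diameter $\sqrt d\,2^i$; this is your first variant, made precise by that per-coordinate event), or diameter $2^{i+1}$ with probability $O(d^{3/2})\,r/2^i$. For $d\ge5$ it does not give both.

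The portal count has the same flaw. $\rho$ enters both the count and the detour $\rho 2^{i+1}$, so ``rescaling $\rho$ by a $d$-dependent constant'' only moves a $d$-dependent factor from one bound to the other. What your grid actually gives is at most $2d(2/\rho+1)^{d-1}=2^{O(d)}\rho^{-(d-1)}$ portals per cell (for $\rho\le1$).

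There is also a smaller slip. $[-L,L)^d+a$ with $a\in[0,L)^d$ misses part of $B(0,\diam(P))$ whenever some $a_j>L-\diam(P)$. The root must contain $[-\diam(P),\diam(P)]^d$ for every shift, e.g.\ $[-L/2,3L/2)^d-a$.

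None of this affects how the lemma is used. The rest of the paper only needs a cut probability of $\poly(d)\,r/2^i$ and $2^{O(d)}\rho^{-(d-1)}$ portals per cell. Such factors are absorbed by adding $O(\log d)$ to $\offset$ and into the $d$-dependence hidden in the running time. The paper is itself inconsistent on this constant, using $\sqrt d\,r/2^i$ in \cref{app:k-med}. So the proof to write is your quadtree argument with these explicit $d$-dependent constants. It should not assert the exact constants $d$ and $1/\rho^{d-1}$ together with diameter $2^{i+1}$, which the axis-parallel quadtree does not achieve simultaneously.
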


In Euclidean space, this decomposition is precisely the quadtree dissection. The previous statement is an adaptation of Lemma 2.3 in~\cite{jacm} to Euclidean space. The improved Property~\ref{prop:doub:prob} is shown, e.g., in Lemma 11.3 of \cite{har2011geometric}.
Since $\diam(P) = \poly(n)$, the decomposition can be computed in near-linear time.

We denote 
$\offset = \log(d) + \log(1/\badcut)$,
a parameter often used throughout this paper.

A \emph{portal-respecting} path between two points $p$ and $q$ cut at level $i$ consists of a sequence of segments connecting $p, p_1, ..., p_i, q_i, ..., q_1, q$, where $p_j$ (resp. $q_j$) is a portal of the rectangle at level $j$ containing $p$ (resp. $q$). 
By extension, a solution to $k$-means is portal-respecting if all paths connecting clients to center are portal-respecting. Recall that $\calL_p$ is the distance from $p$ to the closest point in a solution $\calL$. An important definition we will use is that of badly cut, slightly modified from \cite{jacm}:

\begin{definition}\label{def:badlycut}
  Let $P\subset \R^d$, $\calD$ be 
a hierarchical decomposition of $P$, and $\eps > 0$. 

For any ball $B(x, r)$, we define $\level(x, r)$ the level at which the ball $B(x,r)$ is cut by $\calD$. 

We say a ball $B(x, r)$ is \emph{badly cut w.r.t.\ $\calD$} if 
$\level(x, r) \geq \log(3r) + \offset$.
We say a point $p$ is \emph{badly cut w.r.t.\ $\calD$ and $\calL$} if $B(p, 3\calL_p)$ is badly cut w.r.t.\ $\calD$ (namely,
$\level(p, 3\calL_p) \geq \log(3\calL_p) + \offset$).
\end{definition}

In the following, we will consider badly cut points w.r.t.\ two different sets: 
$\calL$ will be a  constant factor approximation, known to the algorithm, and $\calS$ a slight modification of the optimal solution, unknown to the algorithm, that we will specify later; those two solutions being independent of the decomposition $\calD$.
For input points in $P$, we will examine whether they are badly cut w.r.t.\ $\calD$ and $\calL$. 
For centers of $\calL$, it will be w.r.t.\ $\calD$ and $\calS$. 
Furthermore, $\calD$ will always be the same, hence we  say for simplicity badly cut w.r.t.\ $\calL$ (or w.r.t.\ $\calS$).

The following lemma bounds the probability of being badly cut, and is a direct consequence of Property~\ref{prop:doub:prob}.
\begin{lemma}\label{lem:badlycutddim}
  Let $P, S \subset \R^d$ and $\calD$ a random hierarchical 
  decomposition given by \cref{lem:talwar-decomp}, with $\rho = O(\eps/\log(1/\eps))$.
 For any point $p \in \R^d$, the probability that $p$ is badly cut w.r.t.\  $S$
  is at most $\badcut$.
\end{lemma}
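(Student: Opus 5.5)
The plan is a union bound over levels using Property~\ref{prop:doub:prob} of \cref{lem:talwar-decomp}. Fix $p$ and write $r := 3S_p$. Here $S$ is a fixed set independent of the randomness of $\calD$, so $r$ is deterministic. By \cref{def:badlycut}, $p$ is badly cut exactly when the ball $B(p,r)$ is cut at some level $i \geq \log(3r) + \offset$. The event that $B(p,r)$ is badly cut is therefore contained in the union, over the integer levels $i \ge i_0 := \lceil \log(3r) + \offset \rceil$, of the events that $\calD$ cuts $B(p,r)$ at level $i$. (If $S_p = 0$ the ball is a single point, it is never cut, and the claim is trivial.)

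Next, bound each term by Property~\ref{prop:doub:prob}: the probability of a cut at level $i$ is at most $d r / 2^i$. Summing the geometric series gives
\[
\Pr[p \text{ badly cut}] \le \sum_{i \ge i_0} \frac{d r}{2^i} \le \frac{2 d r}{2^{i_0}} \le \frac{2dr}{3r \cdot 2^{\offset}} = \frac{2d}{3} \cdot \frac{\eps}{d} = \frac{2\eps}{3} \le \eps .
\]
This uses $2^{\offset} = 2^{\log d + \log(1/\eps)} = d/\eps$, with logarithms in base $2$.

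The only point needing care is matching the conventions. First, check that the level at which a ball is cut is well defined, namely the largest $i$ at which the ball is split across parts of $\calB_i$, so that the events for different $i$ are disjoint, or at least that their union covers the badly-cut event. Second, check that $\log$ is base $2$. The factor $3$ in $\log(3r)$ gives slack, which absorbs the factor $2$ from the geometric sum; even without it, one could adjust the constant inside $\offset$. Finally, the portal parameter $\rho = O(\eps/\log(1/\eps))$ plays no role in this probability bound; it only affects the portal property, so the statement holds for any choice of $\rho$.
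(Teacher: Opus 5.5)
Your proof is correct and is essentially the paper's argument: the paper only says the lemma is a direct consequence of Property~\ref{prop:doub:prob} of \cref{lem:talwar-decomp}, and your geometric sum of $dr/2^i$ over the levels $i \ge \log(3r)+\offset$, using $2^{\offset}=d/\eps$, is exactly that consequence. Your hedge on constants is also warranted: the paper's parenthetical in \cref{def:badlycut} uses the threshold $\log(3\calL_p)+\offset$ rather than $\log(9\calL_p)+\offset$, which gives $2\eps$ instead of $2\eps/3$, and this is absorbed by shifting $\offset$ by one.
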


Given a random decomposition, each point is assigned a \emph{budget}. In the following definition, one can think of $\calL$ as an $O(1)$-approximation known by the algorithm, and $\calS$ as the optimal solution. We show in \cref{lem:budgetBounded} that this budget is small, and that one can compute a solution with cost at most the optimal cost plus the budget.

\begin{definition}\label{def:budget}
Let $P\subset \R^d$, $\calD$ be 
a hierarchical decomposition of $P$, $\eps > 0$, $\calL$ and $\calS$ be two 
solutions to $k$-means on $P$. 

For a ball $B(x, r)$, its \emph{detour } with respect to $B$ and $\calD$ is $\deto_\calD(x, r) = \eps 2^{\level(x, r)} r + \eps^2 2^{2\level(x, r)}$.

A point $p\in P$ has \emph{budget} with respect to $\calL$ and $\calS$ $b(p, \eps) = b_1(p, \eps) + b_2(p, \eps) + b_3(p, \eps)$, with
\begin{align*}
    b_1(p, \eps) &= \begin{cases} 
 \deto_\calD(p, 3\calL_p)\text{ if } \level(p, 3\calL_p)\leq \log(3\calL_p) + \offset \\
  0\text{ otherwise }
\end{cases}
\\
b_2(p, \eps) &= \begin{cases} 
\deto_\calD\lpar p, 3(\calL_p + \calS_p)\rpar
 \text{ if } \level(p, 3(\calL_p + \calS_p))\leq \log(3(\calL_p + \calS_p)) + \offset \\
 36 d  \calL_p^2 +  16  d \calS_p^2 \text{ otherwise }
\end{cases}
\\
b_3(p, \eps) &= \begin{cases} 
\deto_\calD(\calL(p),  3\calS_{\calL(p)})
 \text{ if } \level(\calL(p), 3\calS_{\calL(p)})\leq \log(3\calS_{\calL(p)}) + \offset \\
 0 \text{ otherwise }
\end{cases}
\end{align*}
\end{definition}

\section{Structure Theorem and the Approximation    Algorithm}\label{sec:alg}

\subsection{Roadmap}
Now that all definitions are in place, we give a more precise high-level overview of the proof of \cref{thm:mainAlgo}. 
Our goal is to show that the decomposition of \cref{lem:talwar-decomp} with $\rho = \eps/\log(1/\eps)$ is precise enough for a dynamic program to compute a portal-respecting solution that is a $(1+\eps)$-approximation.

To do this, we show that one can find a solution $\calS^*$ such that: (1) it is a $(1+\eps)$-approximation, and (2) the detour from each point to $\calS^*$ is at most the budget of the point. To conclude, it is enough to bound the budget, which is a mere application of \cref{lem:talwar-decomp} and first-moment analysis.
To find such a solution $\calS^*$, we start from the optimal solution (which respects (1) but not (2)), and transform it such that each client can be connected through the portals. 

First, consider the easy case where a client $p$ is not badly cut, and $B(p, 3(\calA_p + \opt_p))$ is not badly cut. Then, the detour from $p$ to $\opt_p$ can be charged to $b_2(p, \eps)$, as the closest optimal center to $p$ lies in the latter ball. 
When this ball is badly cut, instead of serving $p$ by its closest center $\opt(p)$, we serve it by $\opt(\calL(p))$. We show that this can always be charged to the budget of $p$. Indeed, in the sub-case where $\calL(p)$ itself is not badly cut, we can show that the detour between $p$ and $\opt(\calL(p))$ is affordable compared to the distance between them; and this distance can be charged to the budget $b_2(p, \eps)$. This sub-case is illustrated in \cref{fig:caseBad}. 
However, in the sub-case where $\calL(p)$ is badly cut as well, we have no choice but to add the center to $\opt$. This results in a solution with slightly too many centers -- in expectation $(1+\eps)k$, since each center is badly cut with probability $\eps$ -- and we will need to remove some centers of the optimal solution without increasing the cost too much. This step is already presented in \cite{jacm}, and we recall it in \cref{sec:prepareInstance}.
\begin{figure}
    \centering
    \includegraphics[scale=0.5]{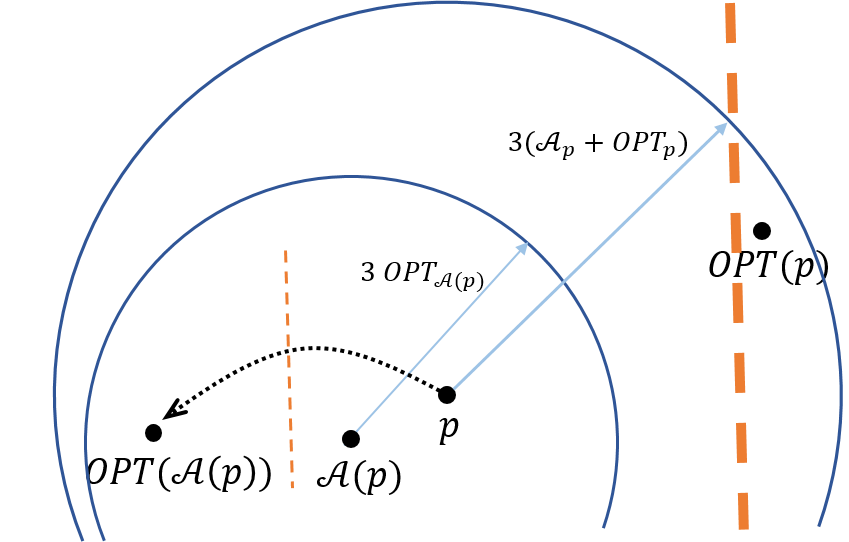}
    \caption{Illustration of one case of the distinction. $B(p, 3(\calA_p + \opt_p))$ is badly cut by the thick orange-dashed line, so $p$ cannot be connected via portals to $\opt(p)$. However, $p$ and $\calA(p)$ are not badly cut, so $p$ can be connected to $\opt(\calA(p))$ instead, making a detour through the thin orange-dashed line. The cost of this reassignment is charged to $b_2(p)$.}
    \label{fig:caseBad}
\end{figure}

This essentially concludes the case where $p$ is not badly cut. When it is badly cut, the decomposition $\calD$ approximates its distances poorly; however, this happens with tiny probability, and we can afford to move $p$ to $\calA(p)$ (charging this movement to $b_1(p, \eps)$). A case analysis, similar to the previous one, then shows that the detour between $\calA(p)$ to $\opt$ can be charged to $b_2(p, \eps)$ and $b_3(p, \eps)$.

In the remainder of this section, we show how to remove some centers from the optimal solution (to then add the badly cut centers of $\calA$), then prove the structure theorem that shows the existence of a solution satisfying (1) and (2). 
We explain in Section~\ref{sec:algo} the dynamic program that relies on this structure theorem and enables the computation of a good approximation, to conclude the proof of \cref{thm:mainAlgo}.

\subsection{Preparing the Instance}\label{sec:prepareInstance}
We consider the mapping of the
centers of $\globalS$ to $\calL$ defined as follows: for 
any $f \in \globalS$, let $\calL(f)$  denote the center of $\calL$ that is the closest 
to $f$. Recall that for a client~$c$, $\calL(c)$ is the center serving $c$ in $\calL$.

For any center $\ell$ of~$\calL$, define $\psi(\ell)$ to be the set of centers 
of $\globalS$ that are mapped to $\ell$, namely, $\psi(\ell) = \{f \in \opt \mid 
\calL(f) = \ell \}$. 
Define $ \calL^1$ to be the set of centers $\ell$ of $\calL$ for 
which there exists a unique $f \in \globalS$ such that $\calL(f) = \ell$, namely 
$ \calL^1 = \{\ell\in \calL \mid |\psi(\ell)| = 1\}$. Let $ \calL^0 = \{ \ell\in \calL \mid 
|\psi(\ell)| = 
0\}$, and $ \calL^{\geq 2} = \calL \setminus( \calL^1 \cup \calL^0)$. 
Similarly, define 
$\globalS^1 = \{f \in \globalS \mid \calL(f) \in \calL^1\}$ 
 and $\globalS^{\geq 2} = \{ f \in \globalS \mid 
\calL(f) \in \calL^{\geq 2}\}$. Note that $|\globalS^{\geq 2}| = | \calL^0|+| \calL^{\geq 2}|$, since 
$|\globalS^1|=| \calL^1|$ and, w.l.o.g., $|\globalS|=| \calL|=k$.

As shown in \cite{jacm}, one can remove some of the centers from $\globalS^{\geq 2}$ to obtain the following result:
\begin{lemma}[See Step 1. and Lemma 4.1 in \cite{jacm}]\label{lem:coststep1}
There exists a set $\calH \subseteq \globalS^{\geq 2}$ such that: 
\begin{itemize}
    \item $|\calH| = \lfloor \eps\cdot  |\globalS^{\geq 2}| / 2\rfloor$,
    \item for any $\ell \in \calL$, the center in $\psi(\ell)$ closest to $\ell$ is not in $\calH$, 
    \item let $\opt' = \opt \setminus \calH$: $\opt'$ has cost at most $(1+O(\eps))\cost(\globalS) + O(\eps)\cost( \calL)$.
\end{itemize}
\end{lemma}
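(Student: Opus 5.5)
The plan is to choose $\calH$ by removing, from each group $\psi(\ell)$ with $\ell \in \calL^{\geq 2}$, only centers whose clients can be cheaply rerouted to a kept center. We then pick the cheapest such centers globally.

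\textbf{Candidates and rerouting.} For every $\ell \in \calL^{\geq 2}$, let $f_\ell$ be the center of $\psi(\ell)$ closest to $\ell$. Every other center of $\psi(\ell)$ is a candidate for removal, and $f_\ell$ is never removed, which gives the second bullet. Call the candidate set $\calC$. We have $|\calC| = \sum_{\ell \in \calL^{\geq 2}}(|\psi(\ell)|-1) \geq |\globalS^{\geq 2}|/2$, since $|\psi(\ell)|-1 \geq |\psi(\ell)|/2$ whenever $|\psi(\ell)|\geq 2$. Fix a candidate $f$ with $\calL(f)=\ell$. If $f$ is removed, every client $c$ of $f$ in $\opt$ is reassigned to $f_\ell$. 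By the triangle inequality and the choice of $f_\ell$,
\[
\dist(c, f_\ell) \le \dist(c,f) + \dist(f,\ell) + \dist(\ell, f_\ell) \le \dist(c,f) + 2\dist(f,\ell).
\]
Since $\ell$ is the center of $\calL$ closest to $f$, we have $\dist(f,\ell) \le \dist(f, \calL(c)) \le \opt_c + \calL_c$. Hence $\dist(c,f_\ell) \le 3\opt_c + 2\calL_c$. Using $(a+b)^2 \le (1+\eps)a^2 + (1+1/\eps)b^2$ with $a=\opt_c$ is too lossy here, so instead we bound the cost of $f$'s cluster crudely by $\mathrm{cost}(f) := \sum_{c \in \opt^{-1}(f)} O(\opt_c^2 + \calL_c^2)$. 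We then average over candidates.

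\textbf{Averaging.} The clusters of distinct centers are disjoint, so $\sum_{f\in \calC}\mathrm{cost}(f) \le O(\cost(\globalS)+\cost(\calL))$. Let $\calH$ consist of the $\lfloor \eps |\globalS^{\geq 2}|/2\rfloor$ candidates with smallest $\mathrm{cost}(f)$. This is possible because $|\calC| \geq |\globalS^{\geq 2}|/2$. Then $\calH$ is at most an $\eps$-fraction of $\calC$, and taking the cheapest ones gives $\sum_{f \in \calH} \mathrm{cost}(f) \le \eps \sum_{f\in\calC}\mathrm{cost}(f) = O(\eps)(\cost(\globalS)+\cost(\calL))$. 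Removing $\calH$ only changes the assignment of clients of removed centers, because each $f_\ell$ is kept. Therefore
\[
\cost(\opt') \le \cost(\globalS) + O(\eps)(\cost(\globalS)+\cost(\calL)),
\]
which is the third bullet.

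\textbf{Main obstacle.} The delicate point is the rerouting bound $\dist(f,\ell) \le \opt_c+\calL_c$. It uses that $\calL(f)$ is the nearest $\calL$-center to $f$, and that $f_\ell$ is the element of $\psi(\ell)$ closest to $\ell$. The latter makes $f_\ell$ no farther from $\ell$ than $f$ is. I would double-check that this holds for every client of $f$; an empty cluster has $\mathrm{cost}=0$, which is fine. The remaining issue is integrality, and rounding $\lfloor\cdot\rfloor$ keeps the counts consistent.
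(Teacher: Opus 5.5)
Your proposal is correct, and it is essentially the argument of Step 1 and Lemma 4.1 in \cite{jacm}, which the paper cites rather than reproving. The removable centers are those of $\psi(\ell)$, $\ell\in\calL^{\geq 2}$, other than the closest one, and there are at least $|\globalS^{\geq 2}|/2$ of them. Clients of a removed $f$ are rerouted to $f_{\calL(f)}$ at distance at most $3\opt_c+2\calL_c$, and taking the $\lfloor \eps|\globalS^{\geq 2}|/2\rfloor$ cheapest candidates costs at most an $\eps$-fraction of $O(\cost(\globalS)+\cost(\calL))$.

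The only point you leave implicit is the second bullet for $\ell\in\calL^1$. It is immediate, because the unique center of $\psi(\ell)$ lies in $\globalS^1$ and hence outside $\calH\subseteq\globalS^{\geq 2}$.
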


Furthermore, we transform the input set $P$ to give it more structure, as follows.
Given a randomized hierarchical decomposition $\calD$ and a solution $\calL$, we define the set $\tilde P$ as follows. For every point $p$, 
\begin{equation*}
\tilde p = \begin{cases}
    \calL(p) \text{ if $p$ is badly cut w.r.t.\ } \calD \text{ and } \calL\\ p \text{ otherwise.}
\end{cases}
\end{equation*}
We note that $\tilde P := \{\tilde p: p \in P\}$ is a random variable that depends on the randomness of~$\calD$.

\subsection{Construction of a Structured Solution}\label{sec:constructS*}

Using \cref{lem:coststep1}, we can now show the existence of a near-optimal portal-respecting solution. 
Let $\calL$ be a fixed solution, and $\opt'$ the solution from \cref{lem:coststep1} for $P$.
 Let $\calD$ be a randomized hierarchical decomposition, and $B_{\calD}$ be the set of centers of $\calL$ that are badly cut w.r.t.\ $\opt'$, i.e., $c\in B_{\calD}$ when the ball $B(c,3\opt'_c)$ is cut at some level greater than $\log(3\opt'_c)+\offset$.

\begin{definition}\label{def:smallDist}
For $\eps  > 0$, we say that $\calD$ has $\eps$-\emph{small distortion} if 
\begin{enumerate}
\item \label{prop:budget} 
the total budget w.r.t.\ $\calL$ and $\opt'$ is bounded: $\sum_{p \in P} b(p, \eps) \leq \eps (\cost(P, \opt) + \cost(P, \calL))$.
\item\label{prop:costSstar} there exists a solution $ \calS^*$ that contains $B_{\calD}$ 
with $\sum_{p \in P} \lpar \dist(p, \tilde p) + \dist(\tilde p,  \calS^*)\rpar^2 \leq (1+\eps)\cost(P, \opt) + \eps \cost(P, \calL)$,
\item\label{prop:det} for each point $p \in P$, its detour to $\calS^*$ is at most $b(p, \eps)$, namely there is a center $s \in \calS^*$ cut from $\tilde p$ at level $i$ such that $(\dist(p, \tilde p) + \dist(\tilde p, s) + \eps 2^i)^2 \leq \lpar\dist(p, \tilde p) + \dist(\tilde p, \calS^*)\rpar^2 + b(p, \eps)$.
\end{enumerate}
\end{definition}

As explained in the introduction, we will use those properties as follows: a standard dynamic program will compute the best \emph{portal-respecting} solution that minimizes $\tilde{\cost}(P, \calS) := \sum_p (\dist(p, \tilde p) + \dist(\tilde p, \calS))^2$. Making the solution portal-respecting incurs an error $\eps 2^i$ for two points cut at level $i$: Property \ref{prop:det} therefore bounds this error by $b(p, \eps)$. Property \ref{prop:budget} of \cref{def:smallDist} ensures that the total budget is bounded, and Property \ref{prop:costSstar} relates $\tilde{ \cost}$ to $\cost$.

Our main structural theorem for $k$-means in low dimension is the following:

\begin{theorem}[Structure Theorem]\label{thm:smallDist}
    For any $\eps > 0$, $\calD$ has $\eps$-small-distortion with probability at least $2/3$.
\end{theorem}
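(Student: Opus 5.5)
We prove the theorem by showing that each of the three properties of \cref{def:smallDist} fails with probability at most $1/9$; a union bound then gives overall success probability at least $2/3$. The parameter $\eps$ in the construction will be a constant multiple smaller than the target $\eps$, and we rescale at the end. We take $\calS^* := \opt' \cup B_\calD$. Property~\ref{prop:budget} and the cost part of Property~\ref{prop:costSstar} are first-moment bounds followed by Markov's inequality. Property~\ref{prop:det} and the size of $\calS^*$ are deterministic consequences of a case analysis once the budget is defined.

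\textbf{Step 1: the budget (Property~\ref{prop:budget}).} We bound $\E[b_j(p,\eps)]$ for each $j$ and each $p$. Consider a ball $B(x,r)$, independent of $\calD$, and its not-badly-cut branch. By \cref{lem:talwar-decomp}, $\Pr[\level = i]\le dr/2^i$, and the relevant levels run up to $\log(3r)+\offset$. This gives
\[
\E[\deto]\le \sum_{i\le \log(3r)+\offset} \frac{dr}{2^i}\bigl(\eps 2^i r + \eps^2 2^{2i}\bigr) = O\bigl(d\,\eps r^2 \offset + d\,\eps^2 r\, 2^{\offset} r\bigr).
\]
Since $2^{\offset}=d/\eps$, this is $O(d^2\eps\log(d/\eps))\,r^2$.

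This is too large by the factor $d^2\log(1/\eps)$. We absorb that factor by running the whole construction with $\eps' = \eps/(d^2\log(1/\eps))$ in place of $\eps$ inside $\deto$. This is where the choice $\rho=\eps/\log(1/\eps)$ enters, since the portal error is $\rho 2^{i+1}$.

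In the badly-cut branch of $b_2$, the cost $O(d(\calL_p^2+\calS_p^2))$ is incurred with probability at most $\eps$ by \cref{lem:badlycutddim}. For $b_3$, we use $\calS_{\calL(p)} \le \calL_p + \calS_p$ with $\calS=\opt'$. Summing over $p$ and using \cref{lem:coststep1} to compare $\cost(P,\opt')$ with $\cost(P,\opt)+\eps\cost(P,\calL)$, we get $\E[\sum_p b(p,\eps)] \le \eps'' (\cost(P,\opt)+\cost(P,\calL))$. Markov's inequality then gives Property~\ref{prop:budget} with probability $8/9$, after rescaling $\eps$.

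\textbf{Step 2: size and cost of $\calS^*$.} By \cref{lem:badlycutddim}, each center of $\calL$ lies in $B_\calD$ with probability at most $\eps$. We must show $|B_\calD|\le |\calH|$, so that $|\calS^*|\le k$. Centers in $\calL^1$, and the closest center of each $\psi(\ell)$, are kept in $\opt'$. A center $\ell$ with $|\psi(\ell)|=1$ and $\ell$ badly cut can be swapped: replace its unique $f$ by $\ell$, paying $O(\cost)$ charged to the same small-probability event. Only centers of $\calL^0\cup\calL^{\ge2}$ genuinely need extra room. Their expected number is at most $\eps|\opt^{\ge2}|$, which is compared to $|\calH|=\lfloor\eps|\opt^{\ge2}|/2\rfloor$ after adjusting constants and applying Markov. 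This is the bookkeeping recalled from \cite{jacm}.

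For the cost, $\dist(p,\tilde p)+\dist(\tilde p,\calS^*)\le \dist(p,\calS^*)+2\dist(p,\tilde p)$, and $\tilde p\ne p$ only when $p$ is badly cut. Expanding the square, the extra terms have expectation $O(\eps)(\cost(\opt')+\cost(\calL))$.

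\textbf{Step 3: detour (Property~\ref{prop:det}), the main obstacle.} We do a case analysis, exhibiting for each $p$ a center $s\in\calS^*$ whose portal detour fits in $b(p,\eps)$.

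\emph{Case (a):} $p$ is not badly cut and $B(p,3(\calL_p+\calS_p))$ is not badly cut. Take $s=\opt'(p)$. It lies inside the ball, so $s$ and $p$ are cut at level at most $\level(p,3(\calL_p+\calS_p))$. Expanding $(\dist+\eps2^i)^2$ gives extra cost $2\eps 2^i\dist+\eps^2 2^{2i}\le \deto$, which $b_2$ pays.

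\emph{Case (b):} that ball is badly cut, but $p$ is not. If $\calL(p)$ is not badly cut with respect to $\opt'$, route $p$ to $s=\opt'(\calL(p))$, which lies within $\calL_p+\calS_{\calL(p)}\le 2\calL_p+\calS_p$. Both the extra squared distance, relative to $\opt'(p)$, and the detour are $O(d(\calL_p^2+\calS_p^2))$. This is exactly the fallback value of $b_2$, so the constants $36d,16d$ must be verified here. If $\calL(p)\in B_\calD$, take $s=\calL(p)$; it is cut from $p$ below $\log(3\calL_p)+\offset$, so $b_1$ pays.

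\emph{Case (c):} $p$ is badly cut, so $\tilde p=\calL(p)$. Either $\calL(p)\in\calS^*$, with zero detour, or $\calL(p)$ is not badly cut with respect to $\opt'$. In the latter case, its detour to $\opt'(\calL(p))$ is paid by $b_3$, together with $b_2$'s fallback for the cross term $2\dist(p,\tilde p)\cdot\eps2^i$.

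The delicate part is verifying that every cross term in these squared expansions is dominated by the corresponding $\deto$ or fallback term. We expect this to need AM-GM to split $2\eps2^i\,a\,b$ into $\eps 2^i(a^2+b^2)$-type pieces matched to the budget components.
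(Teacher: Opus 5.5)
Your skeleton matches the paper's: first moment plus Markov for the budget, the swap construction of \cite{jacm} for size and cost, and a case analysis on badly-cut balls for the detour. But the detour step, which is the genuinely new part, has a real gap. You pick $s=\opt'(p)$ in case (a) and $s=\opt'(\calL(p))$ in cases (b) and (c). Property~\ref{prop:det}, however, compares against $\dist(\tilde p,\calS^*)$, and $\calS^*$ is neither $\opt'$ nor a superset of it.

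Your literal definition $\calS^*:=\opt'\cup B_\calD$ is not a valid solution. It can have $k-|\calH|+|B_\calD|$ centers, and $|\calH|=\lfloor\eps|\opt^{\geq 2}|/2\rfloor$ is $0$ whenever $\opt^{\geq 2}=\emptyset$, while $B_\calD$ may well be nonempty. Once you swap out the unique $f$ for badly cut $\ell\in\calL^1$, as your Step 2 does, $\opt'(p)$ may be missing from $\calS^*$ altogether. Even when it is present, $\calS^*$ also contains the centers of $B_\calD$, so $\dist(p,\calS^*)$ can be much smaller than $\opt'_p$; this happens, e.g., when a center of $B_\calD$ lies much closer to $p$ than $\opt'(p)$. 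In case (a) the budget consists only of $\deto$ terms, which can be as small as $O(\eps(\calL_p+\opt'_p)^2)$. These cannot absorb an excess ${\opt'}_p^2-\dist(p,\calS^*)^2$ of order ${\opt'}_p^2$. The same issue affects case (c) whenever $b_2$ is not in its fallback branch.

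The fix, as in the paper, is to take $s=\calS^*(p)$, resp.\ $s=\calS^*(\calL(p))$. Then use $\dist(p,\calS^*)\le 2\opt'_p+\calL_p$ and $\dist(\ell,\calS^*)\le 2\dist(\ell,\opt')$ for $\ell\in\calL$ (\cref{fact:propSstar} and \cref{fact:ineq}). These place $s$ inside $B(p,3(\calL_p+\opt'_p))$, resp.\ $B(\calL(p),3\opt'_{\calL(p)})$, and that containment is what bounds the level $i$ at which $s$ is cut from $\tilde p$.

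Second, case (c) is incomplete. You pay the cross term $2\calL_p\cdot\eps 2^i$ with ``$b_2$'s fallback'', but $b_2$ takes its fallback value only when $B(p,3(\calL_p+\opt'_p))$ is badly cut.
\begin{itemize}
\item When that ball is not badly cut, you need that both $\calL(p)$ and $s=\calS^*(\calL(p))$ lie in it; their distances to $p$ are $\calL_p$ and at most $3\calL_p+2\opt'_p$. Hence $i\le\level(p,3(\calL_p+\opt'_p))$, and the cross term is dominated by $\deto(p,3(\calL_p+\opt'_p))$.
\item When it is badly cut, use $i\le\log(3\opt'_{\calL(p)})+\offset$, $\eps 2^{\offset}=d$ and $\opt'_{\calL(p)}\le\calL_p+\opt'_p$. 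This gives $2\eps 2^i\calL_p\le 6d(\calL_p+\opt'_p)\calL_p$, which the fallback covers.
\end{itemize}
This pattern handles every cross term: a pair lying in a ball is cut no higher than the ball, and in the badly-cut branches $\eps2^{\offset}=d$. So AM--GM is not the missing ingredient.

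In case (b), the term $\eps^2 2^{2i}$ is paid by $b_1+b_3$. This works because $i\le\max\bigl(\level(p,3\calL_p),\level(\calL(p),3\opt'_{\calL(p)})\bigr)$ and neither ball is badly cut. Charging it to the fallback instead gives $O(d^2(\calL_p+\opt'_p)^2)$, which exceeds $36d\calL_p^2+16d{\opt'}_p^2$ unless you use your rescaled $\eps$. In the sub-case $\calL(p)\in B_\calD$, the excess $\calL_p^2-\dist(p,\calS^*)^2$ has to come from the fallback of $b_2$, not from $b_1$.

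A smaller point: in Step 1 the bound $dr/2^i$ must be capped at $1$ for $i\le\log r$, as the paper does. Summed over all levels as you wrote it, the term $d\eps r^2$ picks up a factor $\log(\diam(P))$ instead of $\offset$.
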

The rest of the paper is dedicated to the proof of this theorem.

\subsubsection{Bounding the Budget}
\begin{lemma}\label{lem:budgetBounded}
Fix two solutions $\calL$ and $\calS$, and let $b(p, \eps)$ be the budget of $p$ w.r.t.\ solutions $\calL$ and $\calS$. With probability  $2/3$ (over the randomness of $\calD$),   $\sum_p b(p, \eps) = O(d^2 \log(1/\eps) \cdot \eps) (\cost(P, \calL) + \cost(P, \calS))$
\end{lemma}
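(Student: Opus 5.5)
We bound $\E\left[\sum_p b(p,\eps)\right]$ by $O(d^2\log(1/\eps)\,\eps)(\cost(P,\calL)+\cost(P,\calS))$ and then apply Markov's inequality with a factor $3$, which gives probability at least $2/3$. By linearity of expectation it suffices to bound $\E[b_j(p,\eps)]$ for each point $p$ and each $j\in\{1,2,3\}$ separately.

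The core computation is a single estimate for a fixed ball $B(x,r)$ with $x$ and $r$ independent of $\calD$. Write $i^*=\log(3r)+\offset$ and consider $\E\left[\deto_\calD(x,r)\cdot\mathbbm{1}[\level(x,r)\le i^*]\right]$. Split by the level $i$ at which the ball is cut, and use Property~\ref{prop:doub:prob} of \cref{lem:talwar-decomp}: $\Pr[\level = i]\le dr/2^i$. The first detour term then contributes
\[
\sum_{i\le i^*}\frac{dr}{2^i}\cdot\eps 2^i r \;=\; \eps d r^2\cdot\#\{\text{levels } i \text{ with } 2^i \gtrsim r \text{ and } i\le i^*\}.
\]
Levels with $2^i\le r$ contribute at most $\eps 2^i r\le \eps r^2$ each, and they form a geometric tail, so they give $O(\eps r^2)$ in total. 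The number of remaining levels is about $i^*-\log r=O(\offset)=O(\log d+\log(1/\eps))$. So the first term contributes $O(\eps d\log(d/\eps))r^2$.

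The second detour term contributes
\[
\sum_{i\le i^*}\frac{dr}{2^i}\,\eps^2 4^i \;=\; \eps^2 d r\sum_{i\le i^*}2^i \;=\; O(\eps^2 d r\, 2^{i^*}) \;=\; O(\eps^2 d r\cdot r\, d/\eps) \;=\; O(\eps d^2 r^2),
\]
using $2^{i^*}=3r\cdot d/\eps$. Hence, for a fixed ball,
\[
\E\left[\deto_\calD(x,r)\cdot\mathbbm{1}[\level(x,r)\le i^*]\right]=O\!\left(\eps d^2\log(1/\eps)\right) r^2 .
\]

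We now apply this estimate to each budget term.

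For $b_1$, take $r=3\calL_p$. This gives $\E[b_1]=O(\eps d^2\log(1/\eps))\calL_p^2$.

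For $b_2$, take $r=3(\calL_p+\calS_p)$. The first case of $b_2$ gives $O(\eps d^2\log(1/\eps))(\calL_p^2+\calS_p^2)$, using $(a+b)^2\le 2a^2+2b^2$. In the otherwise case, $B(p,r)$ is badly cut, which by \cref{lem:badlycutddim} (applied to this ball, via Property~\ref{prop:doub:prob}) happens with probability at most $\eps$. The contribution of this case is therefore at most $\eps(36d\calL_p^2+16d\calS_p^2)$.

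For $b_3$, take $x=\calL(p)$ and $r=3\calS_{\calL(p)}$. The estimate gives $O(\eps d^2\log(1/\eps))\calS_{\calL(p)}^2$. Then charge back to $p$: $\calS_{\calL(p)}\le \dist(\calL(p),\calS(p))\le \calL_p+\calS_p$, so $\calS_{\calL(p)}^2\le 2\calL_p^2+2\calS_p^2$.

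Summing the three contributions over $p$ gives the claimed bound in expectation, and Markov's inequality finishes the proof.

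The main obstacle is the careful level accounting in the fixed-ball estimate. Two points need care. First, the squared detour term must be summed only up to the cap $i^*$; this cap is exactly why the indicator is present in $b_1,b_2,b_3$. Second, the number of levels between $\log r$ and $i^*$ has to come out as $O(\log(d/\eps))$, which is the source of the $\log(1/\eps)$ factor. I would also double-check the low levels $2^i<r$, where Property~\ref{prop:doub:prob} gives a probability bound exceeding $1$. There I would bound the probability trivially by $1$; the resulting sum is geometric and is dominated by its top level, so it contributes only $O(\eps r^2)$.
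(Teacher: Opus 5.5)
Your proposal is correct and follows essentially the same route as the paper's proof: you bound the expected capped detour of a fixed ball by splitting levels into those below $\log r$ (probability at most $1$, geometric sum $O(\eps r^2)$) and those up to the cap (probability $dr/2^i$, giving $O(\offset)\cdot \eps d r^2$ plus the geometric $\eps^2 d r\, 2^{\log r+\offset}=O(\eps d^2 r^2)$ term), add the $O(\eps)$-probability badly-cut term of $b_2$, and finish with linearity and Markov. Your explicit charge-back $\calS_{\calL(p)}\le \calL_p+\calS_p$ for $b_3$ is a step the paper skips (it writes $\calS_p^2$ directly), so keeping it makes the argument complete.
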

\begin{proof}
 We bound the expectation of the budget and then use Markov's inequality. Recall that any ball $B(x, r)$ is cut at level $i$ with probability $d r / 2^i$, and is  badly cut with probability $\badcut$. Furthermore, when $B(x, r)$  is cut at level $i$, then $\deto_\calD(x, r) = \eps r 2^i + \eps^2 2^{2i}$. 
 
 To simplify the analysis slightly, we first analyze the expectation of the variable 
\[\beta(x, r) := \begin{cases}\deto_\calD(x, r) \text{ when } \level(x, r) \leq \log(r) + \offset\\
 0 \text{ otherwise.}\end{cases}\]
 To do so, we distinguish between the cases where $\level(x, r)$ is at most $\log(r)$, between $\log(r)$ and $\log(r) + \offset$, and at least $\log(r) + \offset$. This gives the following upper bound:
 \begin{align*}
     \E[\beta(x, r))]
     &\leq \sum_{i \leq \log(r)} \deto_\calD(x, r) + \sum_{i = \log(r)}^{\log(r) + \offset} \deto_\calD(x, r) \cdot \Pr[\calD \text{ cuts } B(p,r) \text{ at a level } i]\\
     &\leq \sum_{i \leq \log(r)} \lpar \eps  2^i r + \eps^2 2^{2i} \rpar + \sum_{i = \log(r)}^{\log(r) + \offset} \lpar  \eps 2^i r + \eps^2 2^{2i}\rpar \cdot \frac{d r}{2^i}\\
     &\leq 3\eps r^2 + \sum_{i = \log(r)}^{\log(r) + \offset} \lpar \eps d r^2 +  \eps^2 2^{i}  \cdot d r\rpar\\
     &\leq 3\eps r^2 + \offset \cdot \eps d r^2 +  \eps^2  \cdot d r \cdot 2^{\log(r) + \offset}\\
     &= O(d^2\log(1/\eps)) \eps r^2,
 \end{align*}
 where the last line uses $2^{\offset} = 2^{\log(d) + \log(1/\badcut)} = O(d/\eps)$.

Since $b_1(p, \eps) = \beta(p, 3\calL_p)$ and $b_3(p, \eps) = \beta(\calL(p), 3\calS_{\calL(p)})$, we directly get $\E[b_1(p, \eps) + b_3(p, \eps)] \leq O(d^2 \log(1/\eps) \eps) (\calL_p^2 + \calS_p^2)$. 
Bounding $b_2(p, \eps)$ is only slightly more complicated: $b_2(p, \eps) = \beta(p, \calL_p + \calS_p) + X(p, \eps)$, where $X(p, \eps) := 36  d  \calL_p^2 +  16  d \calS_p^2$ when $\level(p, 3(\calL_p + \calS_p)) > \log(3(\calL_p + \calS_p)) + \offset$. 
Therefore, it is enough to compute $\E[X(p, \eps)]$.

Properties of the decomposition in \cref{lem:talwar-decomp} ensure that $\level(p, 3(\calL_p + \calS_p)) > \log(3(\calL_p + \calS_p)) + \offset$ with probability at most $\badcut$, and therefore $\E[X(p, \eps)] \leq \badcut \cdot (36  d  \calL_p^2 +  16  d \calS_p^2)) = O( \eps d (\calL_p^2 + \calS_p^2))$. 

Therefore, $\E[b_2(p, \eps)] = \E[\beta(p, 3(\calL_p + \calS_p)] + \E[X(p, \eps)] \leq O(d^2 \log(1/\eps) \cdot \eps)(\calL_p^2 + \calS_p^2)$.

Summing these inequalities for $b_1, b_2, b_3$, and for all $p \in P$ we get $\E[\sum_{p \in P} b(p, \eps)] \leq O( d^2  \log(1/\eps) \cdot \eps)(\cost(P, \calL)+ \cost(P, \calS))$. Markov's inequality concludes the lemma. 
\end{proof}

\subsubsection{Construction of $ \calS^*$}

To construct $\calS^*$, we start from the solution $\opt'$ obtained from \cref{lem:coststep1} and apply the two following steps.

For any $\ell \in \calL$, define  $f_\ell$ to be its closest center in $\psi(\ell)$, breaking ties arbitrarily. 
 We 
start with $ \calS^* = \opt'$ obtained from \cref{lem:coststep1}. Note that for every $\ell\in 
 \calL^1 \cup \calL^{\geq 2}$, the second property of \cref{lem:coststep1} ensures that $f_\ell$ is a center of $\opt'$.  $ \calS^*$ is then transformed as follows:
  \begin{itemize}
  \item \textbf{Step 1.} 
    For each center $\ell \in B_\calD\setminus \calL^0$ (i.e., $\ell$ is badly cut w.r.t.\ $\calD$ and $\opt'$, and 
$\psi(\ell)\neq\emptyset$), replace $f_\ell$ by $\ell$ in~$ \calS^*$.
  \item \textbf{Step 2.}
    Add all centers of $ \calL^0$ badly cut w.r.t.\ $\calD$ and $\opt'$ to $ \calS^*$.
  \end{itemize}

The two key properties of that procedure are the following:
\begin{fact}[Claim 4.3 and 4.4 in \cite{jacm}]\label{fact:propSstar}
With probability $4/5$, $|\calS^*| \leq k$, so $\calS^*$ is a valid solution, and $\cost(P,  \calS^*) \leq (1+\eps)\cost(P, \opt) + \eps \cost(P, \calL)$.
 Furthermore, for any point $p$,
  it holds that $\dist(p, \calS^*) \leq 2\dist(p, \opt') + \dist(p, \calL)$.  
\end{fact}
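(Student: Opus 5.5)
The plan is to prove the three assertions of \cref{fact:propSstar} separately: the size bound, the pointwise distance bound, and the cost bound. All probabilistic statements are over $\calD$ only, since $\calL$ and $\opt'$ are fixed beforehand. We run the decomposition with a rescaled parameter $\eps' = c\eps$ for a small absolute constant $c$. By \cref{lem:badlycutddim}, each center $\ell \in \calL$ is then badly cut w.r.t.\ $\opt'$ with probability at most $\eps'$. At the end, a union bound over two Markov events, each failing with probability at most $1/10$, gives overall success probability $4/5$.

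\emph{Size.} Step 1 is a swap, so it does not change $|\calS^*|$; only Step 2 adds centers. Hence $|\calS^*| = k - \lfloor \eps|\globalS^{\geq 2}|/2\rfloor + Z$, where $Z$ is the number of badly cut centers of $\calL^0$. Since $|\calL^0| \leq |\globalS^{\geq 2}|$, we get $\E[Z] \leq \eps' |\globalS^{\geq 2}|$.
\begin{itemize}
\item If $\eps|\globalS^{\geq 2}| \geq 4$, then $\lfloor \eps|\globalS^{\geq 2}|/2\rfloor \geq \eps|\globalS^{\geq 2}|/4$. Markov's inequality with $c \le 1/40$ gives $Z \leq \eps|\globalS^{\geq 2}|/4$ with probability at least $9/10$.
\item Otherwise $\E[Z] < 4c$, so $\Pr[Z \geq 1] \leq 4c \leq 1/10$, and with probability at least $9/10$ nothing is added.
\end{itemize}
Either way $|\calS^*| \leq k$ with probability at least $9/10$. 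The one mildly delicate point is this floor in \cref{lem:coststep1}, which is why the small case needs separate treatment.

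\emph{Pointwise distance.} This part is deterministic. Fix $p$ and let $f = \opt'(p)$. If $f \in \calS^*$ we are done. Otherwise $f$ was removed in Step 1, so $f = f_\ell$ for $\ell = \calL(f)$, and $\ell \in \calS^*$. Since $\ell$ is the center of $\calL$ closest to $f$, we have $\dist(f,\ell) \leq \dist(f, \calL(p)) \leq \dist(f,p) + \calL_p$. Therefore
\[
\dist(p,\ell) \leq \dist(p,f) + \dist(f,\ell) \leq 2\dist(p,\opt') + \dist(p,\calL).
\]

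\emph{Cost.} Write $\cost(P,\calS^*) \leq \cost(P,\opt') + \sum_p \Delta_p$. Here $\Delta_p$ is nonzero only if $\opt'(p) = f_\ell$ with $\ell = \calL(\opt'(p))$ badly cut w.r.t.\ $\opt'$. In that case the previous paragraph gives $\Delta_p \leq (2\opt'_p + \calL_p)^2 \leq 8\opt'^2_p + 2\calL_p^2$. The event "$\ell$ is badly cut" depends only on $\calD$ and on the fixed pair $(\ell, \opt'_\ell)$, so it has probability at most $\eps'$. Hence $\E[\sum_p \Delta_p] \leq O(\eps')(\cost(P,\opt') + \cost(P,\calL))$, and Markov's inequality with failure probability $1/10$ bounds $\sum_p \Delta_p$ by $O(\eps')(\cost(P,\opt') + \cost(P,\calL))$. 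Combining this with the bound $\cost(P,\opt') \leq (1+O(\eps))\cost(P,\opt) + O(\eps)\cost(P,\calL)$ from \cref{lem:coststep1} gives the claim, after rescaling $\eps$ by a constant. A union bound with the size event completes the proof.
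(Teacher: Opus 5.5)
Your proof is correct and follows essentially the same route as the argument the paper imports from \cite{jacm}, which the paper also mirrors in its own proofs of \cref{fact:ineq} and \cref{prop:structkmeans}: a Markov bound on the number of badly cut centers of $\calL^0$ compared against $|\calH|$, the triangle-inequality reassignment of the clients of a swapped $f_\ell$ to $\ell$, and a first-moment bound on the resulting extra cost. Your explicit handling of the floor in $|\calH|$ and of the constant-factor rescaling between the badly-cut probability and $|\calH|$ is a detail the paper glosses over, and it is genuinely needed, since with the literal constants the available bound $\eps|\calL^0|$ on $\E[Z]$ can exceed $|\calH| = \lfloor \eps|\opt^{\geq 2}|/2\rfloor$.
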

In addition, the construction of $\calS^*$ ensures that:
\begin{fact}\label{fact:ineq}
For any point $p$,
\begin{align*}
\dist(p, \tilde p) + \dist(\tilde p, \calS^*) &\leq 3\calL_p + 2\opt'_p\\
\dist(p, \calS^*(\calL(p))) &\leq 3\calL_p + 2\opt'_p.
\end{align*}

Furthermore, for any center $\ell \in \calL$, it holds that $\dist(\ell, \calS^*) \leq 2\dist(\ell, \opt')$.
\end{fact}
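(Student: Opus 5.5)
The plan is to prove the last claim (about centers $\ell\in\calL$) first, since the other two inequalities reduce to it via the triangle inequality. Throughout I will use two facts about the construction of $\calS^*$: Step 2 only adds centers, and the only centers of $\opt'$ that disappear are those of the form $f_{\ell'}$ with $\ell'\in B_\calD\setminus\calL^0$. Such an $f_{\ell'}$ is replaced by $\ell'=\calL(f_{\ell'})$.

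\emph{Centers.} Fix $\ell\in\calL$ and let $f=\opt'(\ell)$, so $\dist(\ell,f)=\dist(\ell,\opt')$. If $f\in\calS^*$, then $\dist(\ell,\calS^*)\le\dist(\ell,\opt')$ and we are done. Otherwise $f$ was removed in Step 1, so $f=f_{\ell'}$ for $\ell'=\calL(f)$, and $\ell'\in\calS^*$. Since $\ell'$ is the closest center of $\calL$ to $f$ and $\ell\in\calL$, we have $\dist(f,\ell')\le\dist(f,\ell)$. Hence
\[\dist(\ell,\calS^*)\le\dist(\ell,\ell')\le\dist(\ell,f)+\dist(f,\ell')\le 2\dist(\ell,f)=2\dist(\ell,\opt').\]

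\emph{First inequality.} Split according to whether $p$ is badly cut w.r.t.\ $\calL$.
If $\tilde p=\calL(p)$, then $\dist(p,\tilde p)=\calL_p$. By the center claim and the triangle inequality,
\[\dist(\calL(p),\calS^*)\le 2\dist(\calL(p),\opt')\le 2(\calL_p+\opt'_p).\]
Adding these gives $3\calL_p+2\opt'_p$.
If $\tilde p=p$, we need $\dist(p,\calS^*)\le 2\opt'_p+\calL_p$. This is the last statement of \cref{fact:propSstar}, and it can also be checked directly by the same replacement argument. Let $f=\opt'(p)$. Either $f\in\calS^*$, or $f$ was replaced by $\ell'=\calL(f)$. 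In the latter case $\dist(f,\ell')\le\dist(f,\calL(p))\le\opt'_p+\calL_p$, so $\dist(p,\ell')\le 2\opt'_p+\calL_p$.

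\emph{Second inequality.} Since $\calS^*(\calL(p))$ is the center of $\calS^*$ closest to $\calL(p)$,
\[\dist(p,\calS^*(\calL(p)))\le\calL_p+\dist(\calL(p),\calS^*)\le\calL_p+2(\calL_p+\opt'_p),\]
exactly as in the badly cut case above.

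The only step requiring care is the replacement case in the center claim. There one must use that $\ell'$ is the \emph{nearest} $\calL$-center to $f$, which is what gives the factor $2$. Everything else is routine triangle inequalities.
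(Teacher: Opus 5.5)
Your proof is correct and essentially matches the paper's. The center claim uses exactly the same replacement argument, relying on the fact that $\ell'=\calL(f)$ is the nearest $\calL$-center to $f$. The only difference is in the first two inequalities: the paper derives them uniformly from the client bound $\dist(p,\calS^*)\le 2\opt'_p+\calL_p$ of \cref{fact:propSstar} via $\dist(\calL(p),\calS^*)\le \calL_p+\dist(p,\calS^*)$, whereas you route them through the center claim at $\calL(p)$ and reprove the client bound directly, which makes your argument self-contained.
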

\begin{proof}
The first inequality follows from the fact that $\dist(\tilde p, \calS^*) \leq \dist(\tilde p, \calS^*(p)) \leq \dist(p, \tilde p) + \dist(p, \calS^*(p))$ and \cref{fact:propSstar}. 
The second follows similar lines: 
$\dist(p, \calS^*(\calL(p))) \leq \calL_p + \dist(\calL(p), \calS^*(\calL(p)) \leq \calL_p + \dist(\calL(p), \calS^*(p)) \leq 2\calL_p + \dist(p, \calS^*) \leq 3\calL_p + 2\opt'_p$.

For the third inequality, let $\ell$ be a center of $\calL$, and $\opt'(\ell)$ be its closest center in $\opt'$. Then, either $\opt'(\ell) \in \calS^*$, in which case we are done since $\dist(\ell, \calS^*) = \dist(\ell, \opt')$, or $\opt'(\ell)$ was removed in Step 1. 
In that case, it was replaced by the center $\ell'$ that is closest to $\opt'(\ell)$ in $\calL$: in particular, $\dist(\ell', \opt'(\ell)) \leq \dist(\ell, \opt'(\ell))$. Therefore, it holds that:
\begin{align*}
    \dist(\ell, \calS^*) &\leq \dist(\ell, \ell') \leq \dist(\ell, \opt'(\ell)) + \dist(\ell', \opt'(\ell)) 
    \\ &\leq 2\dist(\ell, \opt')\qedhere
\end{align*}
\end{proof}

We can now use this solution $ \calS^*$ to show $\calD$ has $\eps$-small distortion. 

\subsubsection{Bounding the cost of $\calS^*$}

\begin{lemma}
  \label{prop:structkmeans}
  $ \calS^*$ contains $B_{\calD}$, and with probability at least $4/5$ it holds that 
  \[\sum_{p \in P} \lpar \dist(p, \tilde p) + \dist(\tilde p,  \calS^*)\rpar^2 \leq (1+\eps)\cost(P, \opt) + \eps \cost(P, \calL)\]
\end{lemma}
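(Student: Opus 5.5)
The first claim is immediate from the construction. Every $\ell \in B_\calD$ either has $\psi(\ell)\neq\emptyset$, in which case Step~1 inserts $\ell$ into $\calS^*$, or $\ell\in\calL^0$, in which case Step~2 inserts it. Hence $B_\calD\subseteq\calS^*$ deterministically. For the cost bound, I will split the clients according to whether they are badly cut w.r.t.\ $\calD$ and $\calL$.

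If $p$ is not badly cut, then $\tilde p = p$, and the term equals $\dist(p,\calS^*)^2$. If $p$ is badly cut, then $\tilde p=\calL(p)$, and by the triangle inequality
\[
\dist(p,\tilde p)+\dist(\tilde p,\calS^*)\le \calL_p+\calL_p+\dist(p,\calS^*),
\]
and by \cref{fact:ineq} this is also at most $3\calL_p+2\opt'_p$. Summing gives
\[
\sum_{p} \lpar \dist(p,\tilde p)+\dist(\tilde p,\calS^*)\rpar^2 \le \cost(P,\calS^*) + \sum_{p \text{ badly cut}} (3\calL_p+2\opt'_p)^2 .
\]
The first term is handled by \cref{fact:propSstar}: with probability $4/5$ it is at most $(1+\eps)\cost(P,\opt)+\eps\cost(P,\calL)$. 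The second term is controlled in expectation. By \cref{lem:badlycutddim}, each $p$ is badly cut w.r.t.\ $\calL$ with probability at most $\eps$, because $\calL$ is fixed independently of $\calD$. The quantities $\calL_p$ and $\opt'_p$ are deterministic, since $\opt'$ comes from \cref{lem:coststep1} and does not depend on $\calD$. So the expectation of the second sum is at most
\[
\eps\sum_p (3\calL_p+2\opt'_p)^2 \le 18\eps\,\cost(P,\calL)+8\eps\,\cost(P,\opt'),
\]
and \cref{lem:coststep1} bounds $\cost(P,\opt')$ by $(1+O(\eps))\cost(P,\opt)+O(\eps)\cost(P,\calL)$.

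To finish, I apply Markov's inequality to this nonnegative extra term. With probability at least $1-\delta$ it is $O(\eps/\delta)(\cost(P,\opt)+\cost(P,\calL))$. A union bound with the $4/5$ event of \cref{fact:propSstar} then gives the claim after rescaling $\eps$ by a constant. The obstacle is bookkeeping of probabilities: two failure events must be combined and still yield $4/5$. I would handle this by stating \cref{fact:propSstar} with failure probability $1/10$ (its proof is itself a Markov bound, so the constant is adjustable), taking $\delta=1/10$, and absorbing the resulting constants into $\eps$ via the usual rescaling $\eps\leftarrow\eps/C$. This only changes the $\log(1/\eps)$ and $d$ factors in \cref{lem:badlycutddim} by constants.
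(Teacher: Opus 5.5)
Your proposal is correct and follows the paper's proof essentially step by step. You use the same split into badly cut and non-badly-cut clients, the same bound via \cref{fact:ineq}, an expectation bound resting on $\calL$ and $\opt'$ being independent of $\calD$, Markov's inequality, and \cref{lem:coststep1} to control $\cost(P,\opt')$. Your explicit handling of the two probability-$4/5$ failure events and the rescaling of $\eps$ is more careful than the paper, which combines them implicitly and ends with a $(1+O(\eps))$ bound.
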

\begin{proof}
By construction, $ \calS^*$ does contain $B_\calD$.
To simplify the equations, we write $p \bc$ when $p$ is badly cut w.r.t.\ $\calL$ (in which case $\tilde p = \calL(p)$). It holds that 
\begin{align}
\label{eq:decCostSstar}
    \sum_{p \in P} \lpar \dist(p, \tilde p) + \dist(\tilde p,  \calS^*)\rpar^2 &\leq \sum_{p \in P} \dist( p,  \calS^*)^2 + \sum_{p \bc} \lpar \dist(p, \tilde p) + \dist(\tilde p,  \calS^*)\rpar^2.
\end{align}
By \cref{fact:propSstar}, the first term is at most $(1+\eps)\cost(P, \opt) + \eps \cost(P, \calL)$: thus, we focus only on the second term.
For this, we use \cref{fact:ineq}:
\begin{align*}
    \sum_{p \bc} \lpar \dist(p, \tilde p) + \dist(\tilde p,  \calS^*)\rpar^2 & \leq \sum_{p \bc} \lpar 3\calL_p + 2\opt'_p\rpar^2\\
    &\leq 18\sum_{p \bc}  \calL_p^2 + {\opt'}_p^2
\end{align*}

Each point $p$ is badly cut with probability $\badcut$: therefore, in expectation, it holds that 
\begin{align*}
    \E[18\sum_{p \bc}   \calL_p^2 + {\opt'}_p^2] &\leq 18 \badcut \sum_{p \in P}  \calL_p^2 + {\opt'}_p^2\\
    &\leq 18\badcut (\cost(P, \calL) + \cost(P,  \opt'))
\end{align*}
Using Markov's inequality, it holds with probability $4/5$ that 
\begin{align*}
    \sum_{p \bc} \lpar \dist(p, \tilde p) + \dist(\tilde p,  \opt')\rpar^2 &\leq 18\cdot 5 \cdot \badcut (\cost(P, \calL) + \cost(P,  \opt')).
\end{align*}

Plugging this bound into \cref{eq:decCostSstar} and using \cref{lem:coststep1} to bound the cost of $\opt'$, we conclude:
\begin{align*}
    \sum_{p \in P} \lpar \dist(p, \tilde p) + \dist(\tilde p,  \calS^*)\rpar^2 &\leq (1+\eps)\cost(P, \opt) + \eps \cost(P, \calL)  + 18\cdot 5 \cdot \badcut (\cost(P, \calL) + \cost(P,  \opt')) \\
    &= (1+O(\eps))\cost(P, \opt) + O(\eps)\cost(P, \calL).\qedhere
\end{align*}
\end{proof}

\subsubsection{Bounding the detour}

We now show that the detour incurred by connecting $\tilde P$ to $\calS^*$ through portals of the decomposition $\calD$ is within the budget defined in \cref{def:budget}. 

\begin{lemma}
  \label{lem:detourkmeans}
Let $P \subset \R^d$, $\calL$ be a solution for $k$-means on $P$ and $\calD$ be a randomized decomposition. 
Let $\calS^*$ be the solution from \cref{prop:structkmeans}.

 Then, for any $p \in P$, there is a center $s \in \calS^*$ cut from $\tilde p$ at some level $i$ such that $(\dist(p, \tilde p) + \dist(\tilde p, s) + \eps 2^i)^2 \leq (\dist(p, \tilde p) + \dist(\tilde p, \calS^*))^2 + b(p, \eps)$.
 \end{lemma}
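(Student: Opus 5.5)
We prove the bound by a case analysis on whether $p$ is badly cut w.r.t.\ $\calL$, and then on the level at which the relevant ball is cut. In each case we exhibit a center $s\in\calS^*$ and charge the additive error to one of $b_1,b_2,b_3$. Write $D:=\dist(p,\tilde p)+\dist(\tilde p,\calS^*)$. By \cref{fact:ineq}, $D\le 3\calL_p+2\opt'_p$.

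The elementary tool is the following. Suppose $s$ lies in a ball $B(x,r)$ containing $\tilde p$, and the ball is not badly cut, i.e.\ it is cut at level $j=\level(x,r)\le \log(3r)+\offset$. Then $s$ and $\tilde p$ are separated at level $i\le j$. Expanding the square,
\[
(A+\eps 2^i)^2 \le A^2 + 2\eps 2^j A + \eps^2 2^{2j}.
\]
If $A\le D+O(r)$ and the first-order term $A^2-D^2$ is also $O(\eps 2^j r)$ or otherwise paid for, this is bounded by $D^2+O(\deto_\calD(x,r))$. A constant factor can be absorbed by rescaling $\eps$ in the detour.

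\textbf{Case 1: $p$ not badly cut, so $\tilde p=p$.}
\begin{itemize}
\item[(a)] If $B(p,3(\calL_p+\opt'_p))$ is not badly cut, take $s=\calS^*(p)$. By \cref{fact:propSstar}, $\dist(p,s)\le 2\opt'_p+\calL_p$, so $s$ lies in that ball. The tool above, with $A=D$, charges the extra cost to $b_2$.
\item[(b)] If that ball is badly cut, $b_2=36d\calL_p^2+16d\opt'^2_p$. This is large enough to pay any detour whose level is at most $\log(3\calL_p)+\offset$, since $\eps 2^{i}\lesssim d\calL_p$ there. Take $s=\calS^*(\calL(p))$. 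By \cref{fact:ineq}, $\dist(p,s)\le 3\calL_p+2\opt'_p$, so $\dist(p,s)^2\le 18(\calL_p^2+\opt'^2_p)\le b_2$. Hence $(\dist(p,s)+\eps2^i)^2\le 2\dist(p,s)^2+2\eps^2 2^{2i}$, and it remains to control the level $i$ at which $p$ and $s$ separate.
\begin{itemize}
\item[(b1)] If $\calL(p)$ is not badly cut w.r.t.\ $\opt'$, then $B(\calL(p),3\opt'_{\calL(p)})$, which contains $s$ (since $\dist(\calL(p),\calS^*)\le2\opt'_{\calL(p)}$), is cut at a low level. Also $p$ is not badly cut, so $B(p,3\calL_p)\ni\calL(p)$ is cut at a low level. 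The separation level of $p$ and $s$ is at most the maximum of the two, so $\eps 2^i=O(d(\calL_p+\opt'_{\calL(p)}))$. Since $\opt'_{\calL(p)}\le \calL_p+\opt'_p$, the detour fits in $b_2$ up to constants, which are fixed by the $36d,16d$ coefficients.
\item[(b2)] If $\calL(p)$ is badly cut, Steps 1--2 put $\calL(p)$ itself in $\calS^*$. Take $s=\calL(p)$. The level is bounded by that of $B(p,3\calL_p)$, which is charged to $b_1$, and $\calL_p^2\le b_2$.
\end{itemize}
\end{itemize}

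\textbf{Case 2: $p$ badly cut, so $\tilde p=\calL(p)$.} Here $\dist(p,\tilde p)=\calL_p$ is already counted in $D$, so only the detour from $\ell=\calL(p)$ matters.
\begin{itemize}
\item If $\ell\in B_\calD$, then $\ell\in\calS^*$. Take $s=\ell$: the level is $0$ and the detour is negligible.
\item Otherwise $B(\ell,3\opt'_\ell)$ is not badly cut. Take $s=\calS^*(\ell)$, which lies in this ball by \cref{fact:ineq}. The tool with $x=\ell$ and $r=3\opt'_\ell$ charges the cost to $b_3=\deto_\calD(\ell,3\opt'_\ell)$. The cross term $2\eps2^iD$ involves $D\le \calL_p+\dist(\ell,s)$. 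The $\calL_p$ part does not match $r$, so we bound $2\eps 2^i\calL_p\le \eps^2 2^{2i}+\calL_p^2$. The first piece goes to $b_3$. The second needs $b_1$ or $b_2$, and $b_2$ is available: $p$ badly cut forces the larger ball $B(p,3(\calL_p+\opt'_p))$ to be cut high as well, and I plan to argue it is in fact badly cut, giving the $36d\calL_p^2$ term.
\end{itemize}

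The main obstacle is this cross-term bookkeeping: terms like $\eps 2^i\cdot\calL_p$ where the level $i$ is governed by a different ball. We handle them via AM-GM and the inequalities $\opt'_{\calL(p)}\le \calL_p+\opt'_p$ and $\dist(\ell,\calS^*)\le2\opt'_\ell$, checking that the constants $36d,16d$ suffice. Where only constant-factor slack remains, we rescale $\eps$ in $\deto_\calD$ by a constant. This changes \cref{lem:budgetBounded} by constants only.
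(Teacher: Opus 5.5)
Your case tree is the same as the paper's. Case~1(a), (b1), (b2) and the first bullet of Case~2 match its subcases one for one. The gap is in the second bullet of Case~2 ($p$ badly cut, $\ell=\calL(p)$ not badly cut), at the step you flagged yourself. You bound $2\eps 2^i\calL_p\le\eps^2 2^{2i}+\calL_p^2$ and plan to pay $\calL_p^2$ from the ``otherwise'' branch of $b_2$, on the grounds that $p$ badly cut forces $B(p,3(\calL_p+\opt'_p))$ to be badly cut. That implication is false. Monotonicity only gives $\level(p,3(\calL_p+\opt'_p))\ge\level(p,3\calL_p)\ge\log(3\calL_p)+\offset$. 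Badly cut for the larger ball needs level at least $\log(3(\calL_p+\opt'_p))+\offset$, which is higher by $\log(1+\opt'_p/\calL_p)$. When $\opt'_p\gg\calL_p$ the larger ball can be cut strictly in between. Then $b_2=\deto_\calD(p,3(\calL_p+\opt'_p))$, not $36d\calL_p^2+16d{\opt'}_p^2$, and your charging fails.

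The paper avoids this by splitting on whether that ball is badly cut. If it is not, \cref{fact:ineq} puts both $\ell$ and $s=\calS^*(\ell)$ inside $B(p,3(\calL_p+\opt'_p))$. Hence $i\le i'':=\level(p,3(\calL_p+\opt'_p))$ and $2\eps 2^i\calL_p\le 3\eps 2^{i''}(\calL_p+\opt'_p)\le b_2$. If it is badly cut, combine $i\le\log(3\opt'_\ell)+\offset$ with $\opt'_\ell\le\calL_p+\opt'_p$ to get $2\eps2^i\calL_p\le 6d(\calL_p+\opt'_p)\calL_p\le b_2$.

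Your AM--GM route can also be rescued with the monotonicity bound above. In the non-badly-cut subcase, $2^{i''}\ge 3\calL_p 2^{\offset}=3d\calL_p/\eps$. So the detour form of $b_2$ already satisfies $b_2\ge\eps^2 2^{2i''}\ge 9d^2\calL_p^2$. That argument is what the step needs, not the one you proposed.

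Two smaller points. First, in Case~1(b1) you charge $\eps^2 2^{2i}$ to $b_2$. Your own estimate $\eps 2^i\le 3d(\calL_p+\opt'_p)$ makes this term of order $d^2(\calL_p^2+{\opt'}_p^2)$. That exceeds the $36d$, $16d$ terms by a factor of order $d$, and rescaling $\eps$ inside $\deto_\calD$ does not affect those terms. Instead, as the paper does, use that $p$ and $\ell$ are both not badly cut, so $b_1$ and $b_3$ are genuine detours and $\eps^2 2^{2i}\le\eps^2 2^{2\max(\level(p,3\calL_p),\level(\ell,3\opt'_\ell))}\le b_1+b_3$. Second, in the first bullet of Case~2 you have $s=\tilde p$, so the two points are never separated ($i=-\infty$, zero detour). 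Writing ``level $0$'' would leave an extra $2\eps\calL_p+\eps^2$ unaccounted for.
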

\begin{proof}

We recall that any point $x \in \R^d$ is badly cut w.r.t.\ $\calD$ and  a set  $S$ when the ball $B(x, 3\dist(x, S))$ is cut at level higher than $\log(9 \dist(x, S)) + \offset$. 
We recall our construction of $\tilde P$ and $\calS^*$: When a point $p$ is badly cut w.r.t $\calD$ and $\calL$, $\tilde p = \calL(p)$; when a center $\ell \in \calL$ is badly cut w.r.t $\calD$ and $\opt'$, $\ell \in \calS^*$.

To show the lemma, we make a case distinction, first according to whether $p$ is badly cut or not w.r.t.~$\calD$ and $\calL$. 
In the case where it is badly cut, it holds that $\tilde p = \calL(p)$. Then, we have the following cases.

\begin{enumerate}
 \item If $\calL(p)$ is badly cut w.r.t.\ $\calD$ and $\opt'$, then $\calL(p) \in \calS^*$: therefore, 
$\tilde p = \calL(p) =  \calS^*(p)$, and $\tilde p$ and $ \calS^*(p)$ are not cut, e.g. $i=-\infty$. Therefore, $s = \calL(p)$ satisfies the lemma, with $(\dist(p, \tilde p) + \dist(\tilde p, s) + \eps 2^i)^2 = \dist(p, \tilde p)^2$. 
 \item $\calL(p)$ is not badly cut w.r.t.\ $\calD$ and $\opt'$. Let $s = \calS^*(\calL(p))$, and $i$ be the level at which $\calL(p)$ and $s$ are cut. We write 
 $(\dist(p, \tilde p) + \dist(\tilde p, s) + \eps 2^i)^2 = (\calL_p + \dist(\tilde p, \calS^*))^2 + 2\eps  2^i \cdot \calL_p + 2 \eps 2^i\cdot \calS^*_{\calL(p)} + \eps^2 2^{2i}$. 
 
 By \cref{fact:ineq}, $s \in B(\calL(p), 2\opt'_{\calL(p)})$, and so
  $i \leq i' := \level\lpar \calL(p),  3\opt'_{\calL(p)}\rpar$. 
 Combined with the fact that $\calL(p)$ is not badly cut,  it therefore holds that 
 $2 \eps 2^i\cdot \calS^*_{\calL(p)} + \eps^2 2^{2i} \leq 2 \eps 2^{i'} \cdot 2\opt'_{\calL(p)} + \eps^2 2^{2i'} \leq b_3(p)$. 
 Thus, it is enough to bound $2  \eps 2^i \cdot \calL_p$. For this, we refine our case distinction, according to whether the ball $B(p, 3(\calL_p + \opt'_p))$ is badly cut or not, and use the budget from $b_2(p)$. 
 \begin{enumerate}
  \item If the ball is not badly cut, then we remark that \cref{fact:ineq} ensures  that both $s$ and $\calL(p)$ are in $B(p, 3(\calL_p + \opt'_p))$, and therefore $i \leq i'' := \level(p, 3(\calL_p + \opt'_p))$. Therefore, 
 $2  \eps 2^i \cdot\calL_p \leq \eps 2^{i''} 3(\calL_p + \opt'_p) \leq b_2(p)$.
  \item If this ball is badly cut, then we use $i \leq i' \leq \log(3\opt'_{\calL(p)}) + \offset$  in addition to $\opt'_{\calL(p)} \leq \calL_p + \opt'_p$ as follows:  
 $2  \eps 2^i \cdot\calL_p \leq 6 \eps \cdot \opt'_{\calL(p)} \cdot 2^{\offset} \calL_p \leq 6d (\calL_p + \opt'_p) \calL_p \leq b_2(p)$. 
 \end{enumerate}
 Combining these cases, we get that when $\calL(p)$ is not badly cut, $(\dist(p, \tilde p) + \dist(\tilde p, s) + \eps 2^i)^2 \leq (\calL_p + \dist(\tilde p, \calS^*))^2 + b_2(p) + b_3(p)$.
  \end{enumerate}

 We now turn to the case where $p$ is not badly cut. In that case, $\tilde p = p$: we therefore only need to bound $\lpar \dist( p, \calS^*) + \eps 2^i\rpar^2$. 
 Furthermore, the ball $B(p, 3\calL_p)$ is cut at level $\level(p, 3\calL_p) \leq \log(3\calL_p)+\offset$. 
We make a case distinction according to whether the ball $B(p, 3(\calL_p + \opt'_p))$  is badly cut or not. Note that, by \cref{fact:propSstar}, $ \calS^*(p) \in B(p, 3(\calL_p + \opt'_p))$.
 \begin{enumerate}
  \item If $B(p, 3(\calL_p + \opt'_p))$ is not badly cut, then $p$ and $s = \calS^*(p)$ are cut at level at most $i = \level(p, 3(\calL_p + \opt'_p))$, and $s$ satisfies the lemma: $(\dist(p, s) + \eps 2^i)^2 \leq \dist(p, \calS^*)^2 + b_2(p)$.
  \item If $B(p, 3(\calL_p + \opt'_p))$ is badly cut, we show that $s= \calS^*(\calL(p))$ satisfies the lemma. In particular,
  \begin{enumerate}
      \item if $\calL(p)$ is badly cut, then $s = \calL(p)$; in that case, since $p$ is not badly cut, $p$ and $s$ are cut at level at most $i = \level(p, 3\calL_p)$ and $(\dist(p, s) + \eps 2^i)^2 \leq \dist(p, \calS^*)^2 + b_1(p) + b_2(p)$.
      \item if $\calL(p)$ is not badly cut, then one can assign $p$ to $s=\calS^*(\calL(p))$. The key observation here is that they are cut at level at most $i = \max\lpar \level(p, \calL_p), \level(\calL(p),  3\opt'_{\calL(p)}) \rpar$: indeed, from \cref{fact:ineq} $s$ is in the ball $B(\calL(p), 3\opt'_{\calL(p)})$, therefore $s$ and $\calL(p)$ are in the same cell at level $i$; similarly, $p$ and $\calL(p)$ are at the same cell at level $i$, and, therefore, $p$ and $s$ are cut at level at most $i$.  Additionally, from \cref{fact:ineq}, $p$ and $s$ satisfy $\dist(p, s) \leq 3\calL_p + 2\opt'_p$.    
 Thus, using that for any $a, b, c$, $(a+b+c)^2 \leq 3(a^2 + b^2 + c^2)$:
       \begin{align*}
           (\dist(\tilde p, s) + \eps 2^i)^2 &\leq 
           (3\calL_p + 2\opt'_p + \eps 2^i)^2\\
           &\leq 27 \calL_p^2 + 12 {\opt'}_p^2 + 3\eps^2 2^{2i}.
       \end{align*}
    Now, since $B(p, 3(\calL_p + \opt'_p))$ is badly cut, $27 \calL_p^2 + 12 {\opt'}_p^2 \leq b_2(p)$. Furthermore, by the upper bound on $i$ we have $3\eps^2 2^{2i} \leq b_1(p) + b_3(p)$, as both $p$ and $\calL(p)$ are not badly cut. 
    Therefore, $(\dist(\tilde p, s) + \eps 2^i)^2 \leq b_1(p)+ b_2(p) + b_3(p)$, which concludes the case and the lemma.\qedhere
  \end{enumerate}
\end{enumerate}
%
\end{proof}

\subsection{The Algorithm: how to use the Structure Theorem}\label{sec:algo}

Suppose that $\calD$ has $\rho$-small-distortion, for $\rho = O(\eps / \log(1/\eps))$ (where the $O$ hides dependencies in $d$), and let $\calS^*$ be the solution provided by \cref{def:smallDist}. 
As explained previously, our algorithm will be a standard dynamic program working as follows. The decomposition $\calD$ will be augmented with a set of portals, as defined in \cref{lem:talwar-decomp}, with $\rho = \eps / \log(1/\eps)$. 
The dynamic program (described later in greater detail) computes the best portal-respecting solution, namely, the solution where each path connecting a client to a center crosses boundaries of $\calD$ only at portals. More precisely, it computes the portal-respecting solution with (almost) smallest $\tilde{\cost}(P, \calS) := \sum_p (\dist(p, \tilde p) + \dist(\tilde p, \calS))^2$

\begin{theorem}\label{thm:DP}
    There exists an algorithm running in time $(2/\eps)^{O\lpar \log(1/\eps)/\eps)^{d-1}\rpar} \cdot n \log^3 n$ that computes a portal-respecting solution $\calS$ such that $\tilde {\cost}(P, \calS) \leq (1+\eps) \min_{\calS'} \tilde{\cost}(P, \calS')$.
\end{theorem}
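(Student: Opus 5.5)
We will give a bottom-up dynamic program over the quadtree $\calD$ of \cref{lem:talwar-decomp}, with the portal set $\calP_B$ of each cell $B$ of size $m := (1/\rho)^{d-1}$ and $\rho = \eps/\log(1/\eps)$.

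\textbf{Rounding.} The first step is to discretize the quantities the DP must guess. For a cell $B$ at level $i$, any portal-respecting solution induces two vectors indexed by $\calP_B$. The \emph{outside} vector records, for each portal $x$, the distance from $x$ to the closest center outside $B$, measured along portal-respecting paths. The \emph{inside} vector records the same distance to the closest center inside $B$. We round every such distance up to the next multiple of $\eps 2^i$, truncated at $O(2^i/\eps)$. Larger values can be treated as ``infinite'': since $\tilde P$ consists of non-badly-cut clients or points of $\calL$, a client of $B$ whose center is farther than $2^i/\eps$ pays far more than the portal detour anyway.

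This truncation only works once we exploit the preprocessing. To make it clean, we restrict to solutions in which, for each client $p$, the distance to its center is at most the bound of \cref{fact:ineq}. Alternatively, we use the standard trick of making the distance values relative: adjacent portals differ by at most $\rho 2^{i+1}$, so the vector is determined by one value together with $m$ small increments in $\{-O(1/\eps),\dots,O(1/\eps)\}$. The latter is the version I would commit to, since it gives at most $2^{O(\log(1/\eps)/\eps)}\cdot (1/\eps)^{O(m)}$ choices per vector, i.e.\ $(2/\eps)^{O(m)}$ configurations per cell.

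\textbf{DP table and transitions.} The table entry is indexed by a cell $B$, a number $k_B \le k$ of centers opened inside $B$, and an (inside, outside) configuration pair. Its value is the minimum $\tilde\cost$ contribution of the clients of $\tilde P\cap B$, subject to consistency with the configuration. At a leaf, the entry is computed by direct evaluation.

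At an internal cell with $2^{O(d)}$ children, we enumerate a configuration for every child. We check that these configurations are consistent: a child's outside value at a portal must equal, up to rounding, the minimum over portal-respecting routes through siblings' and the parent's portals. Since child portals include the parent's portals on the shared boundary (the ``furthermore'' of \cref{lem:talwar-decomp}), this is a shortest-path computation among $2^{O(d)}m$ portals.

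The split of $k_B$ among children is a min-plus convolution. Done naively it costs $k^2$. This is the source of the extra $\log^2 n$ factor: as in \cite{jacm}, we first reduce $k$ using the standard observation that cells with few points only need few centers, or restrict $k_B$ to $O(\log n/\eps)$ geometric values. The total time is then $\#\text{cells}\cdot\big((2/\eps)^{O(m)}\big)^{2^{O(d)}}\cdot\polylog n$. Since the number of cells is $O(n\log n)$ after compressing empty cells, this gives the claimed $(2/\eps)^{O((\log(1/\eps)/\eps)^{d-1})} n\log^3 n$.

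\textbf{Error accounting.} Rounding at each level adds an additive error of $\eps 2^i$ per portal hop. Along a path cut at level $i$, these errors sum geometrically to $O(\eps 2^i)$. This is exactly the kind of detour already allowed in Property~\ref{prop:det}, so it multiplies $\tilde\cost$ by at most $1+O(\eps)$ after charging to the budget. The comparison with $\min_{\calS'}\tilde\cost$ over portal-respecting solutions then follows because the optimal portal-respecting solution is itself representable in the table up to this rounding.

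\textbf{Main obstacle.} I expect the main obstacle to be justifying the truncation and relative encoding of distance vectors with squared costs. Rounding errors must be shown to be multiplicatively small rather than merely additively small. I would handle this as in \cite{jacm}, charging $(a+\eps 2^i)^2\le(1+\eps)a^2+O(2^{2i}\eps)$ and bounding the second term by the budget $b(p,\eps)$.
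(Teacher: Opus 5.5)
\textbf{Main gap: handling the number of centers.} The claimed $k$-independent running time depends on how you treat the number of centers, and your DP does not achieve it. Indexing the table by $k_B$ and merging children by a min-plus convolution makes the work depend on $k$, and neither of your fixes removes this.

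The bound $k_B\le|\tilde P\cap B|$ only gives the tree-knapsack bound. The total convolution work is $\Theta(n\min(k,n))$: at the level where cells contain about $k$ points there are about $n/k$ merges, each costing about $k^2$. This is quadratic when $k=\Theta(n)$.

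Restricting $k_B$ to geometric values is not a valid rounding, because the number of centers is a hard constraint, not the quantity being approximated. Rounding up at each of the $\Theta(\log n)$ levels inflates the number of opened centers beyond $k$. The factor is $(1+\eps)^{\Theta(\log n)}$ with your granularity, and still $1+\Theta(\eps)$ with granularity $1+\eps/\log n$, so you only get a bicriteria solution. Rounding down either loses the guarantee that the stored solution for a cell fits in $k_B$ centers, or excludes the optimum whenever it needs a non-geometric count in some cell. This is also not what \cite{jacm} does, and it is not where the $\log^2 n$ comes from.

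The missing idea, which the paper uses following \cite{jacm}, is to swap the roles of objective and constraint. Put a target cost $c_0$ into the configuration, a power of $(1+\eps/\log n)$ in $[\cost(\calL)/n,(1+\eps)\cost(\calL)]$. Let the table value be the minimum number of centers inside the cell that achieves $\tilde\cost\le c_0$ consistently with the portal values. Center counts are then added exactly. The multiplicative rounding of costs compounds over the $O(d\log n)$ levels of the binary tree to only $e^{O(d\eps)}=1+O(\eps)$ for fixed $d$. The $O(\log^2 n/\eps)$ cost targets produce the polylogarithmic factor, with no dependence on $k$.

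\textbf{Smaller gap: truncating portal values.} You truncate portal values at $O(2^i/\eps)$ and call larger values infinite. This does not say how the DP pays for clients of $B$ whose nearest center is that far. Restricting to solutions obeying \cref{fact:ineq} does not help, since $3\calL_p+2\opt'_p$ is not bounded in terms of the cell size. The paper instead uses a flag recording that the nearest center is at distance at least $D/\eps$. In that case all points of the cell are snapped to one representative point: each moves by at most $D$, which changes its cost by a factor $1+O(\eps)$. The cell is then handled like a leaf.

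\textbf{Which guarantee you prove.} Charging the rounding errors $\eps 2^i$ to $b(p,\eps)$ gives the comparison with $\calS^*$ that is used afterwards for \cref{thm:mainAlgo}. It does not give the self-contained multiplicative guarantee in the statement. That is fine for the overall argument, but you should say which guarantee you are proving.
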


The preciseness and nestedness of portal sets ensure that, when two points $x$ and $y$ are cut at level $i$, then the portal-respecting path between $x$ and $y$ has length $\|x-y\| + \rho 2^{i+1}$. 
Furthermore, the algorithm from \cref{thm:DP} computes a solution at least as good as $\calS^*$: by Property \ref{prop:det} of \cref{def:smallDist}, the portal-respecting $\tilde{\cost}$ of $\calS^*$ is at most $\sum_p (\dist(p, \tilde p) + \dist(\tilde p, \calS^*))^2 + b(p, \rho)$.  Property \ref{prop:budget} ensures that the total budget is at most $\rho (\cost(P, \opt) + \cost(P, \calL)) = O(\rho) \cost(P, \opt)$, since $\calL$ is an $O(1)$-approximation. 
Combined with Property \ref{prop:costSstar} of \cref{def:smallDist}, this shows that the portal-respecting $\tilde{\cost}$ of the solution computed by the algorithm is at most $(1+O(\eps))\cost(P, \opt)$. 

To wrap up, we note that the portal respecting $\tilde{\cost}$ is an upper bound on cost: going in straight lines instead of going through portals only improves the cost, and for any point $p$, $\dist(p, \calS) \leq \dist(p, \tilde p) + \dist(\tilde p, \calS)$. Therefore, the solution computed by the dynamic program is a $(1+O(\eps))$-approximation for $k$-means.

\paragraph*{Dynamic Program: proof of \cref{thm:DP}}

The algorithm used to show \cref{thm:DP} is a quite standard dynamic program (DP), described in detail in \cite{jacm}. We give here a brief overview.
For the sake of the DP, we will represent the decomposition $\calD$ as a binary tree, instead of a $2^d$-ary tree.\footnote{Technically, in $\calD$ each cell is decomposed into $2^d$ children, obtained by cutting it with $d$ hyperplanes $H_1, ..., H_d$. Instead, we represent the decomposition by first cutting along $H_1$, creating two children, themselves cut along $H_2$, creating two children for each of them, etc.}

A \textit{configuration} consists of the following parameters: 
\begin{itemize}
    \item a node of the quadtree; let $D$ be its  diameter,
    \item for each portal $p_i$ of the node, two values $\ell_i$ and $s_i$, multiples of $\eps D$ in the range $\lbra 0, D/\eps + 1/\eps\rbra$,
    \item a boolean flag, indicating whether the closest center to the cell is at distance at least $D/\eps$ or not,
    \item a target cost $c_0$, power of $(1+\eps/\log n)$ in the range $\lbra \cost(\calL)/n, (1+\eps) \cost(\calL) \rbra$.
\end{itemize}
The value $\ell_i$ encodes to the distance from portal $p_i$ to the closest center inside the part, and $s_i$ the distance to the closest center outside of the part.

The DP computes, for each configuration, the smallest number of centers necessary to place in the cell to match the portal parameters and have a solution with $\tilde{\cost}$ at most $c_0$.

The first base case of the DP is when the node is a leaf: then, as explained in \cite{jacm}, there is a single point from $\tilde P$ in the cell, and the two options are either to open a center in the cell to have $\tilde{\cost} = 0$, or to connect to a center outside via a portal-respecting path, whose distance is provided by the values of $s_i$.

The other base case of the DP is when the boolean flag indicates that the closest center is at distance more than $D/\eps$, then every point in the cell can be assumed to be at the center of the cell, as if there was a single point. Then, the value of the configuration is computed as for a leaf.

In the non-base case, the recursion is also standard: for a given configuration, 
the DP enumerates all possible compatible configurations for the two children of the node, namely: all configurations where $\ell_i$ and $s_i$ are the same when the portals are the same, respects the triangle inequality otherwise, and the sum of the two estimated costs is equal to $c_0$ (up to the rounding).

The proof of correctness can be found in Section 4.1 of \cite{jacm}. The complexity is improved, as we have fewer portals: by \cref{lem:talwar-decomp} with $\rho = \eps/\log(1/\eps)$, there are $(\log(1/\eps) / \eps)^{d-1}$ of them. For each portal, there are $1/\eps$ choices for the values of $\ell_i$ and $s_i$. Furthermore, there are $O(\log^2 n)$ possible values for $c_0$. 
Last, there are $O(nd \log n)$ nodes in the tree, as each leaf contains at least one point from $\tilde P$, and the depth of the tree is $O(d \log n)$, since $|\calD| \leq \log(\diam(P))$, $\diam(P) = \poly(n)$, and the binary tree has $d$ times more levels than the decomposition $\calD$. 
Therefore, the complexity of the DP is $(2/\eps)^{O\lpar \log(1/\eps)/\eps)^{d-1}\rpar} dn \log^3 n$. 

\section{Fine-Grained Hardness of Approximation in Low Dimensional Spaces}\label{sec:infLD}

The goal of this section is to prove \cref{thm:infLD}, which we restate here for convenience:
\infLD*

Our hardness result is under the Gap-ETH hypothesis:

\begin{definition}[Gap Exponential Time Hypothesis (Gap-ETH)~\cite{D16,MR16}] \label{hyp:gap-eth}
There exist constants $\theta,\delta> 0$ such that any algorithm that, given as input a 3-CNF formula $\varphi$ on $n$ variables and $O(n)$ clauses, can distinguish between $\sat(\varphi)=1$ and $\sat(\varphi)<1-\delta$, must run in time at least $2^{\theta n}$.
\end{definition}

In order to preserve the gap in our reduction, we need to ensure each variable appears in only few clauses. For this, 3-CNF formulas are not enough, but the following lemma shows that one can assume each variable appears in at most $3$ clauses, and each clause has size at most 3 (this is called a $(3, 3)$-CNF):
\begin{lemma}[Corollary 6.1 in \cite{Kisfaludi-BakNW21}]\label{lem:gap-eth}
    There exist constants $\delta, \gamma > 0$ such that there is no algorithm running in time $2^{\gamma n}$ that, given a
(3,3)-CNF formula $\phi$ on $n$ variables and $m$ clauses, can distinguish between the cases where (i) $\phi$
is satisfiable or (ii) every assignment violates at least $\delta m$ clauses, unless Gap-ETH fails.
\end{lemma}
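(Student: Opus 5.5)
The plan is a linear-size, gap-preserving reduction that starts from the Gap-ETH instance of Definition~\ref{hyp:gap-eth}. We take a 3-CNF $\varphi$ with $n$ variables and $m=O(n)$ clauses, and output a $(3,3)$-CNF $\phi$ with $n'=O(n)$ variables and $m'=O(n)$ clauses. The reduction must satisfy two conditions. Completeness: if $\varphi$ is satisfiable, so is $\phi$. Soundness: if every assignment of $\varphi$ violates at least $\delta m$ clauses, then every assignment of $\phi$ violates at least $\delta' m'$ clauses, for a constant $\delta'>0$. A $2^{\gamma n'}$ algorithm for $\phi$ would then give a $2^{\gamma\cdot O(n)}$ algorithm for $\varphi$, contradicting Gap-ETH once $\gamma$ is small enough. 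We do the reduction in two stages: first bound the number of occurrences of each variable by a constant $B$, then bring that number down to $3$.

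\emph{Stage 1 (expander replacement).} Let $d_x$ be the number of occurrences of variable $x$, so $\sum_x d_x\le 3m$. We replace $x$ by fresh copies $x_1,\dots,x_{d_x}$ and use each copy in exactly one original clause. We then fix a $c$-regular expander $G_x$ on $[d_x]$ with edge expansion $h>0$, where $c$ and $h$ are absolute constants (small $d_x$ are handled with a complete graph). For every edge $ij$ we add the two clauses $(x_i\vee\neg x_j)$ and $(\neg x_i\vee x_j)$. Each copy now occurs at most $B:=1+2c$ times, and the total size stays $O(m)$.

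Completeness of Stage 1 is immediate. For soundness, take any assignment $\sigma$ and, for each $x$, let $S_x$ be the minority set of copies, so $|S_x|\le d_x/2$. By expansion, at least $h|S_x|$ equality clauses are violated inside $G_x$. Flipping the copies in $S_x$ to the majority value affects at most $|S_x|$ original clauses and repairs all equality clauses. The result is a consistent assignment, hence an assignment of $\varphi$, violating at least $\delta m$ clauses. Therefore $\#\mathrm{viol}(\sigma)\ge \max\bigl(h\sum_x|S_x|,\ \delta m-\sum_x|S_x|\bigr)=\Omega(\delta m)$.

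\emph{Stage 2 (implication cycles).} Now every variable $y$ occurs $b_y\le B$ times. We replace $y$ by copies $y_1,\dots,y_{b_y}$, one per occurrence, and add the cycle of implications $y_1\Rightarrow y_2\Rightarrow\cdots\Rightarrow y_{b_y}\Rightarrow y_1$, that is, the clauses $(\neg y_i\vee y_{i+1})$. Each copy appears once in its original clause and twice in implication clauses, so at most $3$ times, and all clauses have size at most $3$. Satisfying assignments clearly lift.

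For soundness of Stage 2, call a variable $y$ inconsistent if its copies do not all take the same value. Any inconsistent $y$ violates at least one implication on its cycle. Making $y$ consistent changes at most $B$ original clauses. So a $\phi$-assignment violating $t$ clauses yields a Stage-1 assignment violating at most $(B+1)t$ clauses, and the gap shrinks only by the constant factor $B+1$.

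I expect the main obstacle to be Stage 1. Stage 2 alone, applied directly to unbounded $d_x$, would lose the gap: breaking a long cycle costs only $O(1)$ violated implications while flipping many clauses. The fix is the expander-based majority argument above, and the step that needs care is balancing the expansion constant $h$ against the degree $c$ so that $B$ stays an absolute constant. The remaining checks are bookkeeping: $n'$ and $m'$ are $O(n)$, and the constants $\delta',\gamma$ depend only on $\delta,\theta,c,h$.
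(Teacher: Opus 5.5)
Your proposal is correct. It differs from the paper only in that the paper gives no argument: it imports the statement as a black box (Corollary 6.1 of \cite{Kisfaludi-BakNW21}). You instead reconstruct the standard gap-preserving occurrence reduction, namely expander-based consistency in the style of Papadimitriou--Yannakakis followed by implication cycles.

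Your steps check out.
In Stage~1, every edge of $G_x$ leaving the minority set $S_x$ violates exactly one of its two equality clauses. Repairing $S_x$ touches at most $|S_x|$ original clauses, since each copy occurs in exactly one of them. Together these give $\Omega(\delta m)$ violations.
In Stage~2, a non-constant cycle must contain a transition $y_i=1$, $y_{i+1}=0$, which gives the $(B+1)$-factor loss.
All sizes stay linear ($n_1\le 3m$, $m_1=O(m)$, $n_2\le 3m_1$, $m_2\le 4m_1$). This is exactly what lets a $2^{\gamma n'}$ algorithm contradict the $2^{\theta n}$ bound of Gap-ETH.

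Your route gives a self-contained proof with explicit dependence of $\delta,\gamma$ on the Gap-ETH constants; the citation gives brevity.

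Two small remarks. First, there is nothing to ``balance'' between $h$ and $c$. Any fixed explicit bounded-degree family with constant edge expansion works, because $h$ only enters $\delta'$ and never $B$. Regularity is not needed either, which makes explicit graphs on every vertex count $d_x$ easy to obtain in polynomial time. Second, for $b_y=1$ you should simply omit the degenerate clause $(\neg y_1\vee y_1)$.
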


Next, the result we use from \cite{BergBKMZ20} is the following. For any (3, 3)-CNF formula $\phi$ on $n$ variables and $m$ clauses, there exists a graph $G = (V,E)$ and an embedding $\Pi : V \rightarrow \mathbb{R}^d$ with the following properties:
\begin{itemize}
    \item $|E| = \Theta(n^{d/(d-1)})$, 
    \item $E$ can be partitioned into $E_1$ and $E_2$, with, for each edge $(u,v) \in E_1$, $\|\Pi(u) - \Pi(v)\| = 2$, and for any edge $(u, v) \in E_2$, $\|\Pi(u) - \Pi(v)\| = 2\sqrt{2}$.
    \item For each edge $(u,v) \in E$, the squared Euclidean distance from the midpoint $(\Pi(u) + \Pi(v))/2$ to any other non-incident embedded vertex in $\Pi(V)$ is at least $4$.
\end{itemize}

\begin{proof}[Proof Outline of Embedding Properties from \cite{BergBKMZ20}] 
Based on the framework by de Berg et al., we construct the graph $G$ and its embedding as follows. First, the incidence graph of the formula $\phi$ is drawn in a grid, and using the cube wiring theorem (Theorem 3.3 in \cite{BergBKMZ20}), it is routed into a $d$-dimensional grid of side length $O(n^{\frac{1}{d-1}})$. The total number of vertices in this induced grid graph is $O(n^{\frac{d}{d-1}})$, meaning the number of edges $|E|$ is bounded by $\Theta(n^{\frac{d}{d-1}})$. 

The construction replaces vertices and edges with specific components:
\begin{itemize}
    \item Each variable $v_i$ is replaced by a variable gadget, which is a cycle of length eight.
    \item Each clause containing three literals is replaced by a cycle of length nine.
    \item Each clause containing two literals is replaced by a single edge.
    \item The wires connecting these gadgets are represented by simple paths of odd length.
\end{itemize}

This graph is embedded into the augmented $d$-dimensional grid, which consists of the standard grid along with diagonal edges added to two-dimensional faces. To establish the exact spatial distances, the embedding applies an affine transformation that skews the space such that standard grid edges ($E_2$) have a length of exactly $2\sqrt{2}$ (squared distance $8$) and the utilized diagonal edges ($E_1$) have a length of exactly $2$ (squared distance $4$). Furthermore, this transformation guarantees that the squared distance between the midpoint of any edge and any non-incident vertex is bounded below by exactly $4$.
\end{proof}

Furthermore, the graph $G$ relates to the formula $\phi$ through its local vertex covers. Let $W$ be the global set of wires in $G$, and let a \emph{part} refer to any variable gadget, clause gadget, or wire.
\begin{itemize}
    \item For any wire $w \in W$ of length $2t_w + 1$, a valid minimum local cover must contain exactly $t_w$ of its inner vertices. 
    \item For any clause $C$, let $G_C$ be the subgraph induced by the gadget of $C$, the gadgets of its variables, and the connecting wires $W_C \subseteq W$. 
    \item We can decode a satisfying assignment for a 3-literal clause from any valid minimum local vertex cover of $G_C$ that uses exactly $5$ vertices on the clause cycle, $4$ on each variable cycle, and $t_w$ on each connecting wire. 
    \item Similarly, for a 2-literal clause, we can decode a satisfying assignment from a minimum local cover using exactly $1$ vertex on the clause segment, $4$ on each variable cycle, and $t_w$ on each wire. 
    \item Globally, if $\phi$ is satisfiable (with $n_v$ variables, $n_2$ clauses of size two, and $n_3$ clauses of size three), the minimum vertex cover of $G$ has size exactly $k :=  4n_v + 5n_3 + n_2 + \sum_{w \in W} t_w$. 
\end{itemize}
Note that in graph $G$, any minimum vertex cover corresponding to a satisfying assignment tightly packs exactly 4 vertices per variable cycle, 5 vertices per 3-literal clause cycle, and 1 vertex per 2-literal clause segment. Thus, from a global minimum vertex cover of size $k$ on $G$, one can efficiently recover a satisfying assignment.

From such a graph $G$, we construct a discrete clustering instance with set of clients $P$ and set of candidate centers $\calC$ as follows: $\calC = \{\Pi(v): v \in V\}$ is the embedding of all vertices, and 
$P$ is made of the union of all mid-points of edges of $E$, namely, for any edge $(u,v) \in E$, the midpoint $p_{uv} := (\Pi(u) + \Pi(v))/2$ is in $P$. 

The properties of the embedding ensure the following facts. 
For any edge $e \in E_1$, the closest candidate center to $p_e$ is at a distance of $1$ (squared distance $1$), and for any edge $e \in E_2$, the closest candidate center to $p_e$ is at a distance of $\sqrt{2}$ (squared distance $2$). Therefore, if $G$ has a vertex cover of size $k$, the optimal discrete $k$-means cost is exactly $|E_1| + 2|E_2|$, and the optimal discrete $k$-median cost is exactly $|E_1| + \sqrt{2}|E_2|$.

Our main result is the following, which relates the cost of a discrete clustering solution to the formula $\phi$:
\begin{lemma}\label{lem:decodeD}
Let $\calS$ be a solution to discrete $k$-means (resp. discrete $k$-median) clustering on $G=(V, E)$ for $k = 4n_v + 5n_3 + n_2 + \sum_{w \in W} t_w$ with cost at most $(1+\eps)\cdot (|E_1| + 2|E_2|)$ (resp. $(1+\eps)\cdot (|E_1| + \sqrt{2}|E_2|)$). Then, we can construct in time polynomial in $|V|$ an assignment of variables satisfying at least $m - c'\eps|E|$ many clauses of $\phi$, for some constant $c' > 0$.
\end{lemma}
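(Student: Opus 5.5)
We decode the solution in three steps: turn the centers into a vertex set, bound the number of edges it fails to cover, and then patch that vertex set locally into a near-minimum cover from which an assignment can be read.

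\textbf{Step 1: from centers to an edge-cover count.} Let $U = \Pi^{-1}(\calS) \subseteq V$, a set of at most $k$ vertices. For each edge $e=(u,v)$, call $p_e$ \emph{well-served} if $u \in U$ or $v \in U$; otherwise $e$ is uncovered by $U$. If $e \in E_1$ is covered, $p_e$ pays exactly $1$. If $e \in E_2$ is covered, it pays exactly $2$. If $e$ is uncovered, the third embedding property says every candidate center is non-incident to $e$, so $p_e$ pays squared distance at least $4$. This is at least $1+\min(3,2)$, so every uncovered edge pays an excess of at least $2$ over its baseline $c_e\in\{1,2\}$, and covered edges pay exactly their baseline. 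Hence
\[
\cost(P,\calS) \ \ge\ |E_1|+2|E_2| + 2\,|\{e \text{ uncovered by } U\}| .
\]
Combined with the hypothesis $\cost \le (1+\eps)(|E_1|+2|E_2|)\le 2(1+\eps)|E|$, this gives that $U$ leaves at most $\eps|E|$ edges uncovered.

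For $k$-median the same argument works: covered edges pay $1$ or $\sqrt2$, uncovered ones pay at least $2$, so the excess per uncovered edge is at least $2-\sqrt2>0$. This again gives $O(\eps|E|)$ uncovered edges.

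\textbf{Step 2: from an almost-cover to an almost-satisfying assignment.} Add one endpoint of each uncovered edge to $U$. The result is a genuine vertex cover $U'$ with $|U'| \le k + O(\eps |E|)$.

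Now use the local structure. Every part (variable 8-cycle, clause 9-cycle, clause edge, odd wire of length $2t_w+1$) has a local minimum cover size: $4$, $5$, $1$ and $t_w$ inner vertices respectively. Summing these local minima gives exactly $k$. Moreover, the parts' interiors are vertex-disjoint, or can be charged so, since wires attach at gadget vertices. So any vertex cover uses at least the local minimum on each part. Call a part \emph{overfull} if $U'$ uses strictly more than its local minimum there. Then the number of overfull parts is at most $|U'|-k = O(\eps|E|)$.

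Call a clause $C$ \emph{good} if no part of $G_C$ is overfull. On a good clause, $U'$ restricted to $G_C$ is a valid minimum local cover with the exact counts in the listed properties, so a satisfying assignment for $C$ can be decoded from it. Each part belongs to $G_C$ for $O(1)$ clauses, because each variable lies in at most $3$ clauses (this is the reason for using $(3,3)$-CNF via \cref{lem:gap-eth}). Hence at most $O(\eps|E|)$ clauses are bad.

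\textbf{Step 3: a globally consistent assignment.} Define the assignment per variable from its 8-cycle pattern in $U'$ when that cycle is not overfull, and arbitrarily otherwise. The decoding in the construction of \cite{BergBKMZ20} reads the variable's value off its own cycle and propagates it along wires. So on a good clause, the locally decoded values agree with these global values, and the clause is satisfied. This yields at least $m - c'\eps|E|$ satisfied clauses, and everything is computable in polynomial time.

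The main obstacle is Step 2/3: justifying the additivity of local minima, including how wire endpoints are shared with gadgets, and the consistency of decoding. I would handle this by invoking the part-wise lower bound argument of \cite{BergBKMZ20}, which is exactly how they show that the minimum cover equals $k$ iff $\phi$ is satisfiable, and making it quantitative by counting the deficit.
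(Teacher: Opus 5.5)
Your proof is correct, but it counts by a different route than the paper.

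\textbf{The paper's route.} The paper works directly with the unpatched center set. It classifies destroyed parts as deficient, surplus, or exact-but-broken. It uses the fact that parts have internal degree at most $2$ to show that a part missing $d$ vertices leaves at least $d$ internal edges uncovered. Balancing total deficiency against total surplus then gives at least $\delta_{parts}/2$ uncovered edges. It still needs a separate ``mismatch'' case for clauses whose parts are locally valid but whose connecting edges are uncovered, and it ends with $\beta\le 6\eps|E|$.

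\textbf{Your route.} You first bound the uncovered edges by $\eps|E|$ directly from the cost. You then patch $U$ into a genuine vertex cover $U'$ with $|U'|\le k+\eps|E|$. After that you only need two facts: every part of a true cover meets its local minimum, and these minima sum to $k$. So the number of overfull parts is at most the surplus. Because $U'$ covers every edge, a clause with no overfull part automatically carries a valid minimum local cover of $G_C$, connecting edges included. This removes both the per-gadget arithmetic $E_P-2k_P$ and the mismatch case. It also gives the slightly better constant $c'=3$ for $k$-means.

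\textbf{Shared ingredients and trade-offs.} Both arguments rest on the same structural facts:
\begin{itemize}
\item the gadgets together with the wire interiors partition $V$;
\item the local minima sum to $k$;
\item each part lies in at most three subgraphs $G_C$.
\end{itemize}
You also make the global consistency of the assignment explicit, by reading each variable from its own 8-cycle. The paper leaves this implicit in its appeal to \cite{BergBKMZ20}. In exchange, the paper's route gives a direct structural statement about the unpatched solution (destroyed parts versus uncovered edges), which the lemma itself does not need.
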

\begin{proof}
We see the centers of $\calS$ as covering edges of the graph $G$: centers of $\calS$ correspond to vertices in that graph, and clients to edge midpoints.
We say a part is \emph{destroyed} by $\calS$ when $\calS$ does not induce a valid local minimum vertex cover on that part. 

We can relate the number of destroyed parts with the number of non-covered edges, in order to link the cost of the clustering solution with a valid assignment. We classify the destroyed parts into three types:
\begin{itemize}
    \item \textbf{Type (A):} Parts destroyed because $\calS$ places too few vertices on them (deficiency).
    \item \textbf{Type (B):} Parts destroyed because $\calS$ places too many vertices on them (surplus).
    \item \textbf{Type (C):} Parts that receive the exact correct number of vertices, but leave at least 1 internal edge uncovered.
\end{itemize}

Let $D$ be the total vertex deficiency across all Type (A) parts, and $S$ be the total surplus of vertices in Type (B) parts. Since exactly $k$ centers are used globally, the total deficiency must equal the total surplus, meaning $D = S$. Let $k_A, k_B,$ and $k_C$ be the number of parts of each type, so the total number of destroyed parts is $\delta_{parts} = k_A + k_B + k_C$. Note that $D \ge k_A$ and $S \ge k_B$.

Every part $P$ in our construction is a subgraph where each vertex has an internal degree of at most 2. Let $E_P$ be the number of internal edges of a part and $k_P$ be its optimal local vertex cover size. If a Type (A) part is deficient by $d \ge 1$ vertices, it uses $k_P - d$ vertices, which can cover at most $2(k_P - d)$ edges. Thus, the number of completely uncovered internal edges is at least $E_P - 2(k_P - d) = 2d + (E_P - 2k_P)$. 

We can evaluate this lower bound by checking $E_P$ and $k_P$ for each gadget type. For a variable gadget, which is a cycle of length eight, we have $E_P = 8$ and an optimal cover of $k_P = 4$, yielding $2d + (8 - 8) = 2d \ge d$ uncovered edges. For a clause with three literals, the gadget is a cycle of length nine with an optimal cover of $k_P = 5$, resulting in $2d + (9 - 10) = 2d - 1$ uncovered edges. Because a Type (A) part is deficient by definition, we know $d \ge 1$, which guarantees $2d - 1 \ge d$. This exact same arithmetic applies to the remaining parts: the wire gadgets are simple paths of odd length (where $E_P = 2t_w - 1$ and $k_P = t_w$), and the clauses with exactly two literals use an edge as a clause gadget (where $E_P = 1$ and $k_P = 1$). In both of these cases, the formula again evaluates to $2d - 1 \ge d$. 

Consequently, a Type (A) part missing $d$ vertices always leaves at least $d$ internal edges completely uncovered. Summing this globally, Type (A) parts contribute at least $D \ge k_A$ uncovered internal edges. Furthermore, by definition, every Type (C) part does not induce a valid local vertex cover, so it must leave at least $1$ internal edge uncovered, contributing at least $k_C$ uncovered edges overall.

Since internal edges of different parts are mutually disjoint, we can sum them without double counting. The total number of uncovered edges $U$ is bounded by:
\[ U \ge D + k_C \]
Since $D = S$, we know $D = \frac{D + S}{2} \ge \frac{k_A + k_B}{2}$. Substituting this yields:
\[ U \ge \frac{k_A + k_B}{2} + k_C = \frac{k_A + k_B + 2k_C}{2} \ge \frac{\delta_{parts}}{2} \]
Thus, there are at least $\delta_{parts}/2$ uncovered edges in $G$.

We say a clause is \emph{bad} if it is destroyed, or one of the variables in the clause is destroyed, or one of its variables/wires is destroyed, or if there is a mismatch between them (which leaves a connecting edge uncovered).

Let $\beta$ be the total number of bad clauses. We can partition these into $\beta_{\text{dest}}$ clauses that contain at least one destroyed part, and $\beta_{\text{mis}}$ clauses that have no destroyed parts but contain a mismatch (so $\beta = \beta_{\text{dest}} + \beta_{\text{mis}}$). Since a single destroyed part contributes to at most 3 bad clauses, the $\beta_{\text{dest}}$ clauses correspond to at least $\beta_{\text{dest}}/3$ destroyed parts. As seen earlier, this yields at least $\frac{1}{2}(\beta_{\text{dest}}/3) = \beta_{\text{dest}}/6$ uncovered internal edges. The $\beta_{\text{mis}}$ mismatched clauses have completely valid local parts but there must be at least one uncovered connecting edge per clause (yielding $\ge \beta_{\text{mis}}$ uncovered connecting edges). Because internal edges and connecting edges are mutually disjoint, the total number of uncovered edges $U$ is at least $\frac{\beta_{\text{dest}}}{6} + \beta_{\text{mis}} \ge \frac{\beta_{\text{dest}} + \beta_{\text{mis}}}{6} = \frac{\beta}{6}$.

By the embedding properties, an uncovered edge has its client (the edge midpoint) at a squared distance of at least $4$ from any selected center. For the discrete $k$-means problem, if it were covered locally, the cost would be $1$ (for $E_1$) or $2$ (for $E_2$). Since the maximum standard cost of a locally covered edge is $2$, an uncovered edge introduces a strict additive penalty of at least $4 - 2 = 2$. Consequently, the total cost of the discrete $k$-means solution is bounded below by $|E_1| + 2|E_2| + 2\left(\frac{\beta}{6}\right) = |E_1| + 2|E_2| + \frac{\beta}{3}$.

For the discrete $k$-median problem (which minimizes the sum of standard, un-squared distances), the analysis is entirely analogous. An uncovered edge midpoint is at a distance of at least $\sqrt{4} = 2$ from any selected center, whereas a locally covered edge has a cost of $1$ (for $E_1$) or $\sqrt{2}$ (for $E_2$). Since the maximum standard cost of a locally covered edge is $\sqrt{2}$, an uncovered edge introduces a strict additive penalty of at least $2 - \sqrt{2} > 0.58$. Consequently, the total cost of the discrete $k$-median solution is bounded below by $|E_1| + \sqrt{2}|E_2| + 0.58\left(\frac{\beta}{6}\right) \ge |E_1| + \sqrt{2}|E_2| + \frac{\beta}{11}$.

Conversely, we are given a solution $\calS$ with cost at most $(1+\eps) \cdot \text{OPT}$. 
For $k$-means, this implies the extra penalty is bounded by $\frac{\beta}{3} \leq \eps(2|E_1| + 2|E_2|) = 2\eps |E|$, simplifying to $\beta \leq 6\eps |E|$. 
For $k$-median, the penalty is bounded by $\frac{\beta}{11} \leq \eps(\sqrt{2}|E_1| + \sqrt{2}|E_2|) = \sqrt{2}\eps |E|$, simplifying to $\beta \leq 11\sqrt{2}\eps |E|$. 
In both cases, we can bound the number of bad clauses by $c'\eps|E|$ for a constant $c' > 0$ (e.g., $c'=6$ for $k$-means and $c'=16$ for $k$-median).

Finally, we can efficiently compute a satisfying assignment for all non-bad clauses. For this, we rely on the result of \cite{BergBKMZ20}, who showed how to decode a valid assignment for $\phi$ given a valid vertex cover of $G$. Restricted to the clauses that are not bad, the subgraphs act as a valid local vertex cover, providing a valid assignment for all $m - c'\eps |E|$ non-bad clauses.
\end{proof}

Using \cref{lem:decodeD}, we can conclude the proof of our \cref{thm:infLD}.
\begin{proof}
Let $\delta, \gamma$ be the constants from Lemma~\ref{lem:gap-eth}. Note that for a (3,3)-CNF formula, the number of clauses $m$ is linearly related to the number of variables $n$ ($m = \Theta(n)$).

Assume for the sake of contradiction the existence of an approximation scheme that takes $\eps > 0$ as input and computes a $(1+\eps)$-approximation to discrete $k$-means (or $k$-median) in time $2^{c(1/\eps)^{d-1}} \poly(N)$, where $N = |V| = \Theta(n^{d/(d-1)})$ is the size of the instance, and $c > 0$ is a sufficiently small constant to be determined to reach a contradiction.

Fix a (3,3)-CNF formula $\phi$ with $n$ variables and $m$ clauses, and build the graph $G$ and the corresponding clustering instance as described. We set our approximation parameter $\eps$ such that the number of bad clauses $c'\eps|E|$ is exactly $\frac{\delta}{2} m$. Since $|E| = \Theta(n^{d/(d-1)})$ and $m = \Theta(n)$, this dictates that we set:
\[ \eps = \Theta\left(\frac{n}{n^{d/(d-1)}}\right) = \Theta(n^{-1/(d-1)}) \]
Note that this implies $(1/\eps)^{d-1} = \Theta(n)$. 

By \cref{lem:decodeD}, if we run our hypothetical approximation scheme with this specific $\eps$, we can decode in polynomial time an assignment satisfying all clauses that are not bad. 
Therefore, the embedding ensures the following: if $\phi$ is satisfiable, the optimal discrete $k$-means (or $k$-median) cost is exactly $\text{OPT}$. If the scheme returns a solution of cost at most $(1+\eps)\cdot\text{OPT}$, then we recover an assignment satisfying all but $\frac{\delta}{2} m < \delta m$ clauses. 

The time taken by the scheme to run with this $\eps$ is:
\[ 2^{c\left(1/\eps\right)^{d-1}} \poly(N) =  2^{c \cdot \Theta(n)} \poly(n). \]
By choosing the constant $c$ in the algorithm's running time to be sufficiently small, this overall time becomes strictly less than $2^{\gamma n}$, contradicting Gap-ETH. 
\end{proof}

\bibliographystyle{plainurl}
\bibliography{biblio}{}

\appendix

\section{Algorithm for $k$-Median in Low Dimensions}\label{app:k-med}

For $k$-median, since distances are not squared, we need to slightly adapt our budget and argument. Given this modified budget, the structure of the algorithm is almost identical; thus, we only present the proof of the structure theorem for $k$-median here.

\begin{definition}\label{def:budgetKmed}
Let $P\subset \R^d$, $\calD$ be 
a hierarchical decomposition of $P$, $\eps > 0$, $\calL$ and $\calS$ be two 
solutions to $k$-median on $P$. 

For a ball $B(x, r)$, its \emph{detour} with respect to $B$ and $\calD$ is $\deto_\calD(x, r) = \eps 2^{\level(x, r)}$.

A point $p\in P$ has a $k$-median budget with respect to $\calL$ and $\calS$ of $b(p, \eps) = b_1(p, \eps) + b_2(p, \eps) + b_3(p, \eps)$, with
\begin{align*}
    b_1(p, \eps) &= \begin{cases} 
 \deto_\calD(p, 3\calL_p)\text{ if } \level(p, 3\calL_p)\leq \log(3\calL_p) + \offset \\
  0\text{ otherwise }
\end{cases}
\\
b_2(p, \eps) &= \begin{cases} 
\deto_\calD\lpar p, 3(\calL_p + \calS_p)\rpar
 \text{ if } \level(p, 3(\calL_p + \calS_p))\leq \log(3(\calL_p + \calS_p)) + \offset \\
 3\calL_p + 2\calS_p \text{ otherwise }
\end{cases}
\\
b_3(p, \eps) &= \begin{cases} 
\deto_\calD(\calL(p),  3\calS_{\calL(p)})
 \text{ if } \level(\calL(p), 3\calS_{\calL(p)})\leq \log(3\calS_{\calL(p)}) + \offset \\
 0 \text{ otherwise }
\end{cases}
\end{align*}
\end{definition}

Our goal is to show, similar to \cref{thm:smallDist}, that $\calD$ has small distortion for $k$-median, defined as follows:
\begin{definition}\label{def:smallDistKmed}
For $\eps  > 0$, we say that $\calD$ has $\eps$-\emph{small distortion} for $k$-median if 
\begin{enumerate}
\item\label{prop:budgetKmed} the total $k$-median budget w.r.t.\ $\calL$ and $\opt'$ is bounded: $\sum_{p \in P} b(p, \eps) \leq \eps (\cost(P, \opt) + \cost(P, \calL))$.
\item\label{prop:costSstarKmed} there exists a solution $ \calS^*$ that contains $B_{\calD}$ 
with $\sum_{p \in P}  \dist(p, \tilde p) + \dist(\tilde p,  \calS^*) \leq (1+\eps)\cost(P, \opt) + \eps \cost(P, \calL)$,
\item\label{prop:detKmed} for each point $p \in P$, its detour to $\calS^*$ is at most $b(p, \eps)$, namely there is a center $s \in \calS^*$ cut from $\tilde p$ at level $i$ such that $\dist(p, \tilde p) + \dist(\tilde p, s) + \eps 2^i \leq \dist(p, \tilde p) + \dist(\tilde p, \calS^*) + b(p, \eps)$.
\end{enumerate}
\end{definition}

The solution $\calS^*$ is constructed exactly as in \cref{sec:constructS*}, with the only difference that $\calL$ and $\opt'$ are now solutions to $k$-median instead of $k$-means.

\paragraph{Bounding the Budget.}

As for $k$-means, we first show that the budget is bounded:

\begin{lemma}\label{lem:budgetBoundedKmed}
Fix two solutions $\calL$ and $\calS$, and let $b(p, \eps)$ be the budget of $p$ w.r.t.\ solutions $\calL$ and $\calS$. With probability  $2/3$ (over the randomness of $\calD$),   $\sum_p b(p, \eps) = O(d \log(1/\eps) \cdot \eps) (\cost(P, \calL) + \cost(P, \calS))$.
\end{lemma}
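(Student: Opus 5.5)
We follow the proof of \cref{lem:budgetBounded}: bound $\E[\sum_p b(p,\eps)]$ and conclude with Markov's inequality, which costs a factor $3$ and gives probability $2/3$. The difference is that the detour is now $\deto_\calD(x,r)=\eps 2^{\level(x,r)}$, which is linear rather than quadratic in the cell size. As before, define for a ball $B(x,r)$ the truncated variable $\beta(x,r)=\deto_\calD(x,r)$ when $\level(x,r)\le \log(r)+\offset$, and $0$ otherwise. The main step is to show $\E[\beta(x,r)] = O(d\log(1/\eps))\,\eps r$.

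To bound $\E[\beta(x,r)]$, split by the cut level $i$.

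\begin{itemize}
\item For $i\le \log r$, use the trivial probability bound $1$. The contribution is at most $\sum_{i\le \log r}\eps 2^i\le 2\eps r$, a geometric sum.
\item For $\log r< i\le \log r+\offset$, use Property~\ref{prop:doub:prob} of \cref{lem:talwar-decomp}: the probability of a cut at level $i$ is at most $dr/2^i$. Each such level therefore contributes at most $\eps 2^i\cdot dr/2^i=\eps d r$. There are $\offset=\log d+\log(1/\eps)$ such levels, giving $O(d\log(d/\eps))\,\eps r$.
\item Levels above $\log r+\offset$ contribute $0$ by definition of $\beta$.
\end{itemize}

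This step is simpler than in the $k$-means case, since there is no $\eps^2 2^{2i}$ term whose bound needs $2^{\offset}=O(d/\eps)$. The resulting bound is $O(d\log(1/\eps))\,\eps r$, treating $\log d$ as dominated or absorbed into the hidden dependence on $d$.

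Next, apply this to the three parts of the budget.
\begin{itemize}
\item $b_1(p,\eps)=\beta(p,3\calL_p)$, with the threshold $\log(3r)$ matching \cref{def:badlycut} up to constants. This gives $\E[b_1]=O(d\log(1/\eps)\eps)\calL_p$.
\item $b_3(p,\eps)=\beta(\calL(p),3\calS_{\calL(p)})$. Using $\calS_{\calL(p)}\le \calL_p+\calS_p$ by the triangle inequality, this gives $\E[b_3]=O(d\log(1/\eps)\eps)(\calL_p+\calS_p)$.
\item $b_2(p,\eps)=\beta(p,3(\calL_p+\calS_p))+X(p)$, where $X(p)=3\calL_p+2\calS_p$ on the event that the ball $B(p,3(\calL_p+\calS_p))$ is badly cut, and $0$ otherwise. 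By \cref{lem:badlycutddim}, or directly by Property~\ref{prop:doub:prob} summed over levels above $\log(3r)+\offset$, this event has probability at most $\eps$. Hence $\E[X(p)]\le \eps(3\calL_p+2\calS_p)$.
\end{itemize}

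Summing over $p\in P$ gives $\E[\sum_p b(p,\eps)]=O(d\log(1/\eps)\eps)\bigl(\cost(P,\calL)+\cost(P,\calS)\bigr)$, since for $k$-median $\cost(P,\calS)=\sum_p\calS_p$. Markov's inequality then gives the claimed bound with probability $2/3$.

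The only delicate point is $b_3$. It is charged to $p$ but measured at the center $\calL(p)$, so we must make sure the triangle-inequality bound $\calS_{\calL(p)}\le\calL_p+\calS_p$ is used, rather than summing over centers. Everything else is routine.
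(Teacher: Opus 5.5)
Your proof is correct and follows the paper's argument. You bound $\E[\beta(x,r)]$ by splitting at levels $\log r$ and $\log r+\offset$, write $b_1,b_3$ in terms of $\beta$ and $b_2$ as $\beta$ plus a badly-cut term of probability at most $\eps$, sum over $p$, and apply Markov. There are two small differences from the paper. First, you make explicit the triangle inequality $\calS_{\calL(p)}\le\calL_p+\calS_p$ that the paper uses silently for $b_3$. Second, you use the cut probability $dr/2^i$ from \cref{lem:talwar-decomp} and get $O(d\log(d/\eps))\eps r$, which needs the $d\log d$ term absorbed; the paper uses $\sqrt{d}\,r/2^i$, so its $O(\sqrt{d}\log(d/\eps))\eps r$ is directly $O(d\log(1/\eps))\eps r$.
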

\begin{proof}
 We bound the expectation of the budget and then use Markov's inequality. Recall that any ball $B(x, r)$ is cut at level $i$ with probability $\sqrt{d} r / 2^i$, and is  badly cut with probability $\badcut$. Furthermore, when $B(x, r)$  is cut at level $i$, then $\deto_\calD(x, r) = \eps 2^{i}$. 
 
 To simplify the analysis slightly, we first analyze the expectation of the variable 
\[\beta(x, r) := \begin{cases}\deto_\calD(x, r) \text{ when } \level(x, r) \leq \log(r) + \offset\\
 0 \text{ otherwise.}\end{cases}\]
 To do so, we distinguish between the cases where $\level(x, r)$ is at most $\log(r)$, between $\log(r)$ and $\log(r) + \offset$, and at least $\log(r) + \offset$. This gives the following upper bound:
 \begin{align*}
     \E[\beta(x, r))]
     &\leq \sum_{i \leq \log(r)} \deto_\calD(x, r) + \sum_{i = \log(r)}^{\log(r) + \offset} \deto_\calD(x, r) \cdot \Pr[\calD \text{ cuts } B(p,r) \text{ at a level } i]\\
     &\leq \sum_{i \leq \log(r)} \eps 2^{i} + \sum_{i = \log(r)}^{\log(r) + \offset}\eps 2^{i} \cdot \frac{\sqrt{d} r}{2^i}\\
     &\leq 2 \eps r + \sum_{i = \log(r)}^{\log(r) + \offset} \eps \sqrt{d} r \\
     &\leq 2\eps r + \offset \cdot \eps \sqrt{d} r\\
     &= O(\sqrt{d}\log(d/\eps)) \eps r,
 \end{align*}
 where the last line uses $\offset =\log(d)/2 + \log(1/\badcut)$.

Since $b_1(p, \eps) = \beta(p, 3\calL_p)$ and $b_3(p, \eps) = \beta(\calL(p), 3\calS_{\calL(p)})$, we directly get $\E[b_1(p, \eps) + b_3(p, \eps)] \leq  O(\sqrt{d}\log(d/\eps)) \eps (\calL_p + \calS_p)$. 

Bounding $b_2(p, \eps)$ is only slightly more complicated: $b_2(p, \eps) = \beta(p, \calL_p + \calS_p) + X(p, \eps)$, where $X(p, \eps) := 3\calL_p + 2\calS_p$ when $\level(p, 3(\calL_p + \calS_p)) > \log(3(\calL_p + \calS_p)) + \offset$. 
Therefore, it is enough to compute $\E[X(p, \eps)]$.
Properties of the decomposition in \cref{lem:talwar-decomp} ensure that $\level(p, 3(\calL_p + \calS_p)) > \log(3(\calL_p + \calS_p)) + \offset$ with probability at most $\badcut$, and therefore $\E[X(p, \eps)] \leq \badcut \cdot (3\calL_p + 2\calS_p) = O(\eps)(\calL_p + \calS_p)$. 

Therefore, $\E[b_2(p, \eps)] = \E[\beta(p, 3(\calL_p + \calS_p)] + \E[X(p, \eps)] \leq O(\eps)(\calL_p + \calS_p)$.

Summing these inequalities for $b_1, b_2, b_3$ and all $p\in P$, we get $\E[\sum_{p \in P} b(p, \eps)] \leq O(\sqrt{d}\log(d/\eps)\eps)(\cost(P, \calL) + \cost(P, \calS))$. Markov's inequality concludes the lemma.
\end{proof}

\paragraph{Bounding the cost of $\calS^*$.}

Now, exactly as for $k$-means, the solution $\calS^*$ is near-optimal: we can use it to show the following lemma, similar to \cref{prop:structkmeans}. We repeat the proof for completeness.

\begin{lemma}
  \label{prop:structkmedian}
  $ \calS^*$ contains $B_{\calD}$, and with probability at least $4/5$ it holds that 
  \[\sum_{p \in P}  \dist(p, \tilde p) + \dist(\tilde p,  \calS^*) \leq (1+\eps)\cost(P, \opt) + \eps \cost(P, \calL)\]
\end{lemma}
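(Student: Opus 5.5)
The plan mirrors the proof of \cref{prop:structkmeans}, with squares removed. By Step 1 and Step 2 of the construction in \cref{sec:constructS*}, every center of $\calL$ that is badly cut w.r.t.\ $\calD$ and $\opt'$ is put into $\calS^*$. So $\calS^*$ contains $B_\calD$ by construction, and only the cost bound needs work.

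I would split the sum according to whether $p$ is badly cut w.r.t.\ $\calL$. If $p$ is not badly cut, then $\tilde p = p$ and the term is exactly $\dist(p,\calS^*)$. If $p$ is badly cut, I bound the term crudely. This gives
\begin{align*}
\sum_{p \in P} \dist(p, \tilde p) + \dist(\tilde p, \calS^*) \leq \sum_{p\in P} \dist(p,\calS^*) + \sum_{p \bc} \lpar \dist(p, \tilde p) + \dist(\tilde p, \calS^*)\rpar .
\end{align*}
For the first sum I invoke \cref{fact:propSstar}, now read for the $k$-median versions of $\calL$ and $\opt'$. It holds with probability $4/5$ and gives $(1+\eps)\cost(P,\opt)+\eps\cost(P,\calL)$. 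This requires that Claims 4.3--4.4 of \cite{jacm}, and \cref{lem:coststep1}, are stated for general $z$, which I believe is the case in \cite{jacm}.

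For the second sum, \cref{fact:ineq} is purely metric, so it applies verbatim and gives $\dist(p,\tilde p)+\dist(\tilde p,\calS^*) \le 3\calL_p + 2\opt'_p$. By \cref{lem:badlycutddim}, each $p$ is badly cut with probability at most $\badcut$. Its expectation is therefore at most $3\badcut\,(\cost(P,\calL)+\cost(P,\opt'))$, and Markov's inequality gives a bound of $O(\eps)(\cost(P,\calL)+\cost(P,\opt'))$ with constant probability.

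Finally, \cref{lem:coststep1} gives $\cost(P,\opt') \le (1+O(\eps))\cost(P,\opt)+O(\eps)\cost(P,\calL)$. Adding the two sums then yields $(1+O(\eps))\cost(P,\opt)+O(\eps)\cost(P,\calL)$, and rescaling $\eps$ gives the stated form.

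The main obstacle is probability bookkeeping rather than any inequality. The event of \cref{fact:propSstar} and the Markov event each fail with some constant probability, so the overall $4/5$ guarantee requires tightening the Markov constant, for instance by using a threshold $20\badcut(\cdot)$ to get failure probability $1/20$. A union bound with slightly adjusted constants then gives the claimed probability, exactly as in the $k$-means proof.
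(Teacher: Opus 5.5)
Your proposal is correct and follows the paper's proof essentially line by line. It uses the same split into $\sum_{p}\dist(p,\calS^*)$ plus a sum over badly cut points, \cref{fact:propSstar} for the first term, \cref{fact:ineq} together with the probability-$\eps$ bound on being badly cut and Markov for the second, and then \cref{lem:coststep1} and a rescaling of $\eps$. Your attention to combining the two probabilistic events is more careful than the paper, which never union-bounds them. Note, though, that tightening only the Markov step leaves you at $1 - 1/5 - 1/20 = 3/4$, so reaching $4/5$ overall also requires boosting the success probability of \cref{fact:propSstar}; that is possible since it also comes from a Markov-type argument, at the cost of constants in the $O(\eps)$ terms.
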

\begin{proof}
By construction, $ \calS^*$ does contain $B_\calD$.
To simplify the equations, we write $p \bc$ when $p$ is badly cut w.r.t.\ $\calL$ (in which case $\tilde p = \calL(p)$). It holds that 
\begin{align}
\label{eq:decCostSstarKmed}
    \sum_{p \in P} \dist(p, \tilde p) + \dist(\tilde p,  \calS^*) &\leq \sum_{p \in P} \dist( p,  \calS^*) + \sum_{p \bc} \dist(p, \tilde p) + \dist(\tilde p,  \calS^*).
\end{align}
By properties of $ \calS^*$, the first term is at most $(1+\eps)\cost(P, \opt) + \eps \cost(P, \calL)$: thus, we focus only on the second term.
For this, we use \cref{fact:ineq}:
\begin{align*}
    \sum_{p \bc} \dist(p, \tilde p) + \dist(\tilde p,  \calS^*) & \leq \sum_{p \bc} 3\calL_p + 2\opt'_p
\end{align*}

Each point $p$ is badly cut with probability $\badcut$: therefore, in expectation, it holds that 
\begin{align*}
    \E[3\sum_{p \bc}  \calL_p + \opt'_p] &\leq 3 \badcut \sum_{p \in P}\calL_p + \opt'_p \leq 3\badcut (\cost(P, \calL) + \cost(P,  \opt'))
\end{align*}
Using Markov's inequality, it holds with probability $4/5$ that 
\begin{align*}
    \sum_{p \bc}  \dist(p, \tilde p) + \dist(\tilde p,  \opt') &\leq 3\cdot 5 \cdot \badcut (\cost(P, \calL) + \cost(P,  \opt')).
\end{align*}

Plugging this bound into \cref{eq:decCostSstarKmed} and using \cref{lem:coststep1} to bound the cost of $\opt'$, we conclude:
\begin{align*}
    \sum_{p \in P} \dist(p, \tilde p) + \dist(\tilde p,  \calS^*) &\leq (1+\eps)\cost(P, \opt) + \eps \cost(P, \calL)  + 3\cdot 5 \cdot \badcut (\cost(P, \calL) + \cost(P,  \opt')) \\
    &= (1+O(\eps))\cost(P, \opt) + O(\eps)\cost(P, \calL).
\end{align*}
\end{proof}

\paragraph{Bounding the detour.}
As for $k$-means, we now conclude the structure theorem showing that the detour incurred by connecting $\tilde P$ to $\calS^*$ through portals of the decomposition $\calD$ is within the budget defined in \cref{def:budgetKmed}. Again, we repeat the proof only for completeness: the case disjunction is slightly simpler, as it does not involve squares.

\begin{lemma}
  \label{lem:detourkmedians}
Let $P \subset \R^d$, $\calL$ be a solution for $k$-median on $P$ and $\calD$ be a randomized decomposition. 
Let $\calS^*$ be the solution from \cref{prop:structkmeans}.

 Then, for any $p \in P$, there is a center $s \in \calS^*$ cut from $\tilde p$ at level $i$ such that $\dist(p, \tilde p) + \dist(\tilde p, s) + \eps 2^i \leq \dist(p, \tilde p) + \dist(\tilde p, \calS^*) + b(p, \eps)$.
 \end{lemma}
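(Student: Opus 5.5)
I will mirror the case analysis of \cref{lem:detourkmeans}, which becomes simpler without squares: here the target inequality is additive, so I only need $\dist(\tilde p, s) - \dist(\tilde p,\calS^*) + \eps 2^i \leq b(p,\eps)$. Throughout I use \cref{fact:ineq} and \cref{fact:propSstar}, which hold for $k$-median unchanged since the construction of $\calS^*$ is identical. I also use the $k$-median budget of \cref{def:budgetKmed}, where $\deto_\calD(x,r)=\eps 2^{\level(x,r)}$. I first split on whether $p$ is badly cut w.r.t.\ $\calL$.

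\textbf{Case 1: $p$ is badly cut,} so $\tilde p = \calL(p)$. If $\calL(p)$ is badly cut w.r.t.\ $\opt'$, then $\calL(p)\in\calS^*$. Taking $s=\calL(p)$ gives $i=-\infty$ and zero detour. Otherwise take $s=\calS^*(\calL(p))$. By \cref{fact:ineq}, $s\in B(\calL(p),2\opt'_{\calL(p)})$, so the cut level satisfies $i\le \level(\calL(p),3\opt'_{\calL(p)})$. That level is at most $\log(3\opt'_{\calL(p)})+\offset$ because $\calL(p)$ is not badly cut. Hence $\eps 2^i\le b_3(p)$. Since $s$ is the closest point of $\calS^*$ to $\tilde p$, the term $\dist(\tilde p,s)-\dist(\tilde p,\calS^*)$ vanishes. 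Unlike the $k$-means proof, there is no cross term $2\eps 2^i\calL_p$ to handle, so $b_2$ is not even needed here.

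\textbf{Case 2: $p$ is not badly cut,} so $\tilde p = p$ and $\level(p,3\calL_p)\le\log(3\calL_p)+\offset$.
\begin{itemize}
\item If $B(p,3(\calL_p+\opt'_p))$ is not badly cut, take $s=\calS^*(p)$. It lies in that ball by \cref{fact:propSstar}, so $\eps2^i\le \deto_\calD(p,3(\calL_p+\opt'_p)) = b_2(p)$.
\item If the ball is badly cut, take $s=\calS^*(\calL(p))$. When $\calL(p)$ is badly cut, $s=\calL(p)$ is cut from $p$ at level at most $\level(p,3\calL_p)$. Then $\eps 2^i\le b_1(p)$, while $\dist(p,s)\le\calL_p\le 3\calL_p+2\opt'_p$. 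When $\calL(p)$ is not badly cut, $p$ and $s$ are separated at level at most $\max(\level(p,3\calL_p),\level(\calL(p),3\opt'_{\calL(p)}))$, so $\eps2^i\le b_1(p)+b_3(p)$, and \cref{fact:ineq} gives $\dist(p,s)\le 3\calL_p+2\opt'_p$.
\end{itemize}
In both sub-cases of the second bullet, $\dist(p,s)-\dist(p,\calS^*)\le \dist(p,s)\le 3\calL_p+2\opt'_p$. This equals $b_2(p)$ in its badly-cut branch, and adding $\eps2^i$ completes the bound.

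The only delicate point, as in the $k$-means case, is the level bound when rerouting $p$ through $\calL(p)$. It rests on nestedness: if $p,\calL(p)$ share a cell at level $i_1$ and $\calL(p),s$ share a cell at level $i_2$, then $p,s$ share a cell at level $\max(i_1,i_2)$. This is where the factor $3$ in the ball radii absorbs the factor $2$ in \cref{fact:ineq}. Beyond this, I expect no obstacle: the linear budget $3\calL_p+2\calS_p$ in $b_2$ was chosen exactly to dominate the worst-case distance $3\calL_p+2\opt'_p$ in the badly cut case.
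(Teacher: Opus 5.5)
Your proof is correct and follows the paper's own case analysis essentially step for step. You split on whether $p$ is badly cut, then on whether $\calL(p)$ and $B(p,3(\calL_p+\opt'_p))$ are badly cut, and charge to $b_1,b_2,b_3$ just as the paper does; in the sub-case where $p$ is not badly cut but both $B(p,3(\calL_p+\opt'_p))$ and $\calL(p)$ are, you are slightly more careful than the paper, which charges only $b_1(p)$ there, whereas you also absorb the possibly positive excess $\dist(p,\calL(p))-\dist(p,\calS^*)$ into the badly-cut branch of $b_2(p)$.
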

\begin{proof}
We recall that any point $x \in \R^d$ is badly cut w.r.t.\ $\calD$ and a set $S$ when the ball $B(x, 3\dist(x, S))$ is cut at level higher than $\log(9 \dist(x, S)) + \offset$. 
We recall our construction of $\tilde P$ and $\calS^*$: When a point $p$ is badly cut w.r.t $\calD$ and $\calL$, $\tilde p = \calL(p)$; when a center $\ell \in \calL$ is badly cut  w.r.t $\calD$ and $\opt'$, $\ell \in \calS^*$.

To show the lemma, we make a case distinction, first according to whether $p$ is badly cut or not w.r.t.~$\calD$ and $\calL$. 
In the case where it is badly cut, it holds that $\tilde p = \calL(p)$.

\begin{enumerate}
 \item If $\calL(p)$ is badly cut w.r.t.\ $\calD$ and $\opt'$, then $\calL(p) \in \calS^*$: therefore, 
$\tilde p = \calL(p) =  \calS^*(p)$, and $\tilde p$ and $ \calS^*(p)$ are not cut, e.g. $i=-\infty$. Therefore, $s = \calL(p)$ satisfies the lemma, with $\dist(p, \tilde p) + \dist(\tilde p, s) + \eps 2^i = \dist(p, \tilde p)$. 
 \item $\calL(p)$ is not badly cut w.r.t.\ $\calD$ and $\opt'$. Let $s = \calS^*(\calL(p))$, and $i$ be the level at which $\calL(p)$ and $s$ are cut: we write 
 $\dist(p, \tilde p) + \dist(\tilde p, s) + \eps 2^i = \calL_p + \dist(\tilde p, \calS^*) + \eps 2^{i}$. 
 
 By \cref{fact:ineq}, $s \in B(\calL(p), 2\opt'_{\calL(p)})$, and so
  $i \leq i' := \level\lpar \calL(p),  3\opt'_{\calL(p)}\rpar$. 
 Combined with the fact that $\calL(p)$ is not badly cut,  it therefore holds that 
 $\eps 2^{i} \leq \eps 2^{i'} \leq b_3(p)$. 
  \end{enumerate}
 
 We now turn to the case where $p$ is not badly cut. In that case, $\tilde p = p$: we therefore only need to bound $\dist( p, s) + \eps 2^i$. 
 Furthermore, the ball $B(p, 3\calL_p)$ is cut at level $\level(p, 3\calL_p) \leq \log(3\calL_p)+\offset$. 
We make a case distinction according to whether the ball $B(p, 3(\calL_p + \opt'_p))$  is badly cut or not. Note that, by \cref{fact:propSstar}, $ \calS^*(p) \in B(p, 3(\calL_p + \opt'_p))$.
 \begin{enumerate}
  \item If $B(p, 3(\calL_p + \opt'_p))$ is not badly cut, then $p$ and $s = \calS^*(p)$ are cut at level at most $i = \level(p, 3(\calL_p + \opt'_p))$, and $s$ satisfies the lemma: $\dist(p, s) + \eps 2^i \leq \dist(p, \calS^*) + b_2(p)$.
  \item If $B(p, 3(\calL_p + \opt'_p))$ is badly cut, we show that $s= \calS^*(\calL(p))$ satisfies the lemma. In particular,
  \begin{enumerate}
      \item if $\calL(p)$ is badly cut, then $s = \calL(p)$; in that case, since $p$ is not badly cut, $p$ and $s$ are cut at level at most $i = \level(p, 3\calL_p)$ and $\dist(p, s) + \eps 2^i \leq \dist(p, \calS^*) + b_1(p)$.
      \item if $\calL(p)$ is not badly cut, then one can assign $p$ to $s=\calS^*(\calL(p))$. The key observation here is that they are cut at level $i = \max\lpar \level(p, \calL_p), \level(\calL(p),  3\opt'_{\calL(p)}) \rpar$: indeed, from \cref{fact:ineq} $s$ is in the ball $B(\calL(p), 3\opt'_{\calL(p)})$, therefore $s$ and $\calL(p)$ are in the same cell at level $i$; similarly, $p$ and $\calL(p)$ are at the same cell at level $i$, and, therefore, $p$ and $s$ are cut at level at most $i$.  Additionally, from \cref{fact:ineq}, $p$ and $s$ satisfy $\dist(p, s) \leq 3\calL_p + 2\opt'_p$.    
 Therefore,
       \begin{align*}
           \dist(\tilde p, s) + \eps 2^i &\leq 
           3\calL_p + 2\opt'_p + \eps 2^i
       \end{align*}
    Now, since $B(p, 3(\calL_p + \opt'_p))$ is badly cut, $3\calL_p + 2\opt'_p \leq b_2(p)$. Furthermore, by the upper bound on $i$ we have $\eps 2^{i} \leq b_1(p) + b_3(p)$, as both $p$ and $\calL(p)$ are not badly cut. 
    Therefore, $\dist(\tilde p, s) + \eps 2^i \leq b_1(p)+ b_2(p) + b_3(p)$, which concludes the case and the lemma.
  \end{enumerate}
\end{enumerate}
%
\end{proof}

\paragraph{Conclusion of the $k$-median algorithm.} Given this structural result for $k$-median, the very same dynamic program as for $k$-means allows us to compute a $(1+\eps)$-approximate solution.

\section{Other Extensions of the Low Dimensional Algorithm}\label{app:extension}
In this section, we present improved approximation schemes for two clustering variants that were considered in \cite{jacm}, namely prize-collecting clustering and clustering with outliers.

\subsection{Prize-collecting $k$-means and $k$-median}
In the prize-collecting $k$-median and $k$-means, every client $p \in P$ is associated with a penalty $\pi(p)$ that one can pay instead of connecting $p$ to a center. A solution to the problem is a set of $k$ centers $\calS$, with a set of outliers $O \subseteq P$: the cost of that solution is $\sum_{p \in P \setminus O} \cost(p, \calS) + \sum_{p \in O} \pi(p)$. 

The approach of \cite{jacm} is the following: compute a constant-factor approximation $(\calL_S, \calL_O)$, and say a point $p \in \calL_O$ is badly cut if it is badly cut w.r.t.\ $\calD$ and $\opt$ (i.e., the same definition as for centers of $\calL_S$). An outlier is badly cut with probability $\badcut$. The set of badly cut centers is $B_\calD \subseteq \calL_S$, and the set of badly cut outliers is $BO_\calD \subseteq \calL_0$.
Now, an instance has \emph{small distortion} when 
\begin{enumerate}
\item\label{prop:budget-pc} the total budget w.r.t.\ $\calL$ and $\opt'$ is bounded: $\sum_{p \in P} b(p, \eps) \leq \eps (\cost(P, \opt) + \cost(P, \calL))$.
\item\label{prop:costSstar-pc} there exists a solution $(\calS^*, O^*)$ such that $\calS^*$ that contains $B_{\calD}$, $O^*$ contains $BO_\calD$, and
with \\
$\sum_{p \in P \setminus O^*} \lpar \dist(p, \tilde p) + \dist(\tilde p,  \calS^*)\rpar^2 \leq (1+\eps)\cost(P, \opt) + \eps \cost(P, \calL)$,
\item\label{prop:det-pc} for each point $p \in P \setminus O^*$, its detour to $\calS^*$ is at most $b(p, \eps)$, namely there is a center $s \in \calS^*$ cut from $\tilde p$ at level $i$ such that $(\dist(p, \tilde p) + \dist(\tilde p, s) + \eps 2^i)^2 \leq \lpar\dist(p, \tilde p) + \dist(\tilde p, \calS^*)\rpar^2 + b(p, \eps)$.
\end{enumerate}

\cite{jacm} showed how to modify the construction of $\calS^*$ in order to account for outliers: essentially, every time a center of $\opt'$ is closed in Step 1 of the construction, all the clients it serves that are also in $\calL_0$ are added to $O^*$. Additionally, all badly cut outliers are added to $O^*$. Since each of these happens with probability at most $\eps$, the cost of the outliers added this way is at most $\eps \sum_{p \in \calL_O} \pi(p) = O(\eps)\cost(P, \opt)$.

For the other properties, the proof of the structure theorem \cref{thm:smallDist} goes through exactly the same way, showing that with probability at least $2/3$ the instance has small distortion. The dynamic program of \cref{sec:algo} can be easily adapted to solve the prize-collecting version of the problems.

\subsection{$k$-means and $k$-median with Outliers}
In this clustering version, we are given an integer $z$, and the cost function ignores the $z$ most expensive points. More formally, a solution to the problem is a set of $k$ centers $\calS$, with a set of $z$ outliers $O \subseteq P$: the cost of that solution is $\sum_{p \in P \setminus O} \cost(p, \calS)$.

\cite{jacm} showed how to use the ideas from prize-collecting to get a bi-criteria solution, namely a solution with $(1+\eps)z$ outliers and cost $(1+\eps) \cost(P, \opt)$. The idea is that the previous argument for prize-collecting adds new outliers with probability only $\eps$: starting with a set of $z$ outliers, the number of such added outliers is only $O(\eps z)$. The remaining part of the analysis -- bounding detour for the non-outliers -- stays unchanged compared to the standard $k$-median and $k$-means: hence, our improved analysis carries over directly to the outlier case.

\end{document}